\documentclass[10pt]{article}
\usepackage[letterpaper,top=2cm,bottom=2cm,left=3cm,right=3cm,marginparwidth=1.75cm]{geometry}
\usepackage[english]{babel}
\usepackage{amsmath,amssymb,amsfonts}
\usepackage{algorithmic}
\usepackage{graphicx}
\pdfoutput=1
\usepackage{hyperref}
\usepackage{authblk} 
\usepackage[dvipsnames]{xcolor}
\usepackage{colortbl}
\usepackage[utf8]{inputenc}
\usepackage[T1]{fontenc}
\usepackage{diagbox} 
\usepackage{hyperref}
\hypersetup{colorlinks=true, urlcolor=blue, 
linkcolor=red, citecolor=blue}
\usepackage{lipsum}
\usepackage{amsthm}

\usepackage{textcomp}
\usepackage{centernot}
\usepackage[braket, qm]{qcircuit}
\usepackage{braket}
\def\BibTeX{{\rm B\kern-.05em{\sc i\kern-.025em b}\kern-.08em
    T\kern-.1667em\lower.7ex\hbox{E}\kern-.125emX}}
\usepackage{multicol}
\usepackage{wrapfig}
\usepackage{tabularx}
\usepackage{tikz}
\usepackage{pgfplots} 
\usetikzlibrary{pgfplots.groupplots}
\usetikzlibrary{intersections}
\pgfplotsset{compat=1.16}
\pgfplotsset{scaled y ticks=false}
\usepgfplotslibrary{fillbetween}
\usepackage{mathtools}
\usepackage{listings}
\usepackage{url}
\usepackage{booktabs}

\usepackage{array}
\newcommand{\PreserveBackslash}[1]{\let\temp=\\#1\let\\=\temp}
\newcolumntype{C}[1]{>{\PreserveBackslash\centering}p{#1}}
\usepackage{comment}
\usepackage{multirow}
\usepackage{flushend}
\usepackage{float}
\usepackage{cleveref}
\usepackage{algorithm2e}
\RestyleAlgo{ruled}

\newtheorem{lemma}{Lemma}[section]

\definecolor{darkmagenta}{rgb}{0.55, 0.0, 0.55}
\usepackage{float}
\usepackage{afterpage}

\title{\textbf{Constraint‑Preserving XY‑Mixers under Trotterized Adiabatic Evolution}}

\author[1,4]{Abhishek Awasthi\thanks{abhishek.awasthi@basf.com}}
\affil[1]{BASF Digital Solutions GmbH, Ludwigshafen am Rhein, Germany}
\author[2,4]{Maximilian Hess}
\author[2,4]{Salome Lomadze\thanks{salome.lomadze@infineon.com}}
\affil[2]{Infineon Technologies AG, Neubiberg, Germany}
\author[3,4]{\\ Francesco B\"ar}
\author[3,4]{Christian Biefel\thanks{christian.biefel@sap.com}}
\affil[3]{SAP SE, Walldorf, Germany}
\affil[4]{Quantum Technology and Application Consortium (QUTAC), Germany \thanks{This work and the present manuscript were conducted within the Quantum Technology and Application Consortium (QUTAC).}}

\date{}
\begin{document}
\maketitle

\begin{abstract}
Constraint handling remains a central challenge for quantum algorithms applied to combinatorial optimization. Standard approaches based on penalty terms increase problem size, distort energy landscapes, and often degrade algorithmic performance. Constraint‑preserving mixers, such as XY‑mixers, provide an alternative by restricting quantum evolution to feasible subspaces, but their practical implementation on gate‑based hardware requires Trotterization, introducing potentially significant approximation errors.
In this work, we systematically study the interplay between constraint‑preserving XY‑mixers and Trotterized Adiabatic Evolution (TAE). We present a detailed theoretical analysis of the origin and scaling of Trotter errors in XY‑mixers and show that the dominant contribution depends on the size and structure of individual constraints rather than on the total problem size. We validate our analysis through extensive numerical simulations on three representative optimization problems: Portfolio Optimization, the Multi‑Car Paint Shop problem, and a Multi‑Commodity Flow problem. For problems containing a single global equality constraint spanning all variables, we observe that Trotter errors significantly impair the performance of XY‑mixers, making standard Pauli‑X mixers more robust under realistic implementations. In contrast, for problems whose constraints decompose into multiple disjoint local blocks, XY‑mixers consistently outperform X‑mixers by several orders of magnitude, even under Trotterized evolution. Our results establish constraint locality as the key criterion for effective use of XY‑mixers and demonstrate that TAE combined with structure‑aware mixer design offers a robust and theoretically grounded alternative to variational quantum optimization methods. This work provides practical guidance for designing quantum algorithms for constrained optimization on near‑term and emerging quantum hardware. Finally, we also provide a mixer Hamiltonian for the TSP-like 2-way-1-hot constraints.
\end{abstract}

\section{Introduction}
Combinatorial optimization problems are ubiquitous across many industries, for example supply chain planning, shop floor scheduling and logistics.
While classical optimization algorithms are often able to approximately solve practical instances of considerable size with great accuracy, many combinatorial optimization problems remain computationally hard. As a result, combinatorial optimization has become an active area of research in quantum computing.

One of the most prominent quantum optimization problems is the Quantum Approximate Optimization Algorithm (QAOA)~\cite{farhi}. It relies on encoding the optimization problem in a Hamiltonian $\mathcal{H}_{f}$ whose eigenvalues correspond to values of the objective function, i.e. $\mathcal{H}_f \ket{x} = f(x) \ket{x}$.
This requires encoding all the relevant information for the optimization problem in an objective function $f:\{0,1\}^n \rightarrow \mathbb{Z}$. The usual workaround to deal with constraints is to add terms to the objective which penalize infeasible assignments.
There are significant problems which come with the penalty term method. First, so-called slack variables are needed to encode inequality constraints, increasing the problem size. Second, the factors of the penalty terms often have to be chosen quite large in order to guarantee that solutions do not violate the initial constraints, which introduces several numerical problems.

A possible remedy to this problem is the use of constraint-preserving mixers~\cite{Hadfield_2019, fuchs_2022}. Recall that QAOA circuits consist of alternating applications of $\exp(-i\gamma\mathcal{H}_f)$ and $\exp(-i\beta\mathcal{H}_M)$, where the typical choice for $\mathcal{H}_M$ is $\mathcal{H}_M = \sum_{i=0}^{n-1} X_i$.
In the broadest sense~\cite{Hadfield_2019}, the idea behind constraint-preserving mixers is to replace $\exp(-i\beta\mathcal{H}_M)$ with a family of unitaries $U(\beta)$ with two properties.
The first is that $U(\beta)\ket{x}$ is a superposition of feasible states whenever $\ket{x}$ is feasible.
The other is that for two feasible states $\ket{x}, \ket{y}$, there exist $r \in \mathbb{N}$ and $\beta \in \mathbb{R}$ such that $\bra{y}U(\beta)^r\ket{x}>0$.
One way to obtain such a family of unitaries is to choose a suitable Hamiltonian $\mathcal{H}_M$ and take $U(\beta)=\exp(-i \beta \mathcal{H}_M)$~\cite{fuchs_2022}.

A concrete example for this technique is given by so-called $XY$-mixers which arise from the Hamiltonian $\mathcal{H}_{XY} = \sum_{(i,j) \in \mathcal{E}} X_i X_j + Y_i Y_j$, where $\mathcal{E} \subset \{0,1,\dots,n-1\}^2$. The $XY$-mixer is constraint-preserving for so-called $k$-hot constraints $\sum_{i} x_i = k$ which are encountered in a multitude of optimization problems, such as portfolio optimization~\cite{He_2023}, traveling salesperson problem (TSP)~\cite{Lin_TSP_1965} or paint shop optimization~\cite{Yarkoni2021}.

In the present work, we explore different subtleties which come up when applying $XY$-mixers to different optimization problems. The first is related to the implementation of $U_{XY}(\beta)=\exp(-i \beta \mathcal{H}_{XY})$. The standard method of realizing the exponential of $\mathcal{H}_{XY}$ is to approximate it via a Trotter decomposition. We find that the error of this approximation grows with the number of summands of $\mathcal{H}_{XY}$, i.e. the size of $\mathcal{E} \subset \{0,1,\dots,n-1\}^2$. This finding is underscored by numerical experiments on two different problem classes, namely the portfolio optimization problem and the binary paint shop problem. While we encounter a single $k$-hot constraint encompassing every problem variable in the portfolio optimization problem, the binary paint shop problem has multiple $1$-hot constraints of only two variables. Our results clearly show the negative effect of the large approximation error in the former case, while in the latter case, an approach with $XY$-mixers clearly outperforms the standard method using $X$-mixers.

Another finding relates to eigenstates of the $XY$-Hamiltonian. As QAOA and related quantum algorithms are inspired by the adiabatic principle, they are expected to work best when the algorithm is initialized in the lowest energy eigenstate among $k$-hot states of the mixer Hamiltonian $\mathcal{H}_{XY}$~\cite{He_2023}. 
Another requirement to the initial state is that it can be efficiently prepared.
We compare two common choices of the index set $\mathcal{E} \subset \{0,1,\dots,n-1\}^2$, namely $\mathcal{E}_{\text{full}} = \{(i,j)\in\{0,1,\dots,n-1\}^2: i<j\}$ (full connectivity) and $\mathcal{E}_{\text{ring}} = \{(i,i+1 \mod n): i\in\{0,1,\dots,n-1\}\}$ (ring connectivity). We find that the full $XY$ Hamiltonian has the Dicke state $\ket{D_k^n} = \frac{1}{\sqrt{\binom{n}{k}}} \sum_{|x|=k} \ket{x}$ as an eigenstate, while $\ket{D_k^n}$ is in general not an eigenstate of the ring $XY$-mixer. Consequently, we focus exclusively on the fully connected $XY$-mixer.

\section{Constraints in optimization problems with XY-mixers}\label{sec:xy-mixers}

Constraints in combinatorial optimization problems come in different forms and their integration into variational quantum algorithms carries varying degrees of difficulty. For the purposes of this work, we examine only linear constraints and make two distinctions with respect to their form. First, an equality constraint for an $n-$variable optimization problem has the form $w^T x = c$ where $w \in \mathbb{Z}^n$ denotes the coefficient vector and $c \in \mathbb{Z}$ denotes the threshold.
Its counterpart is an inequality constraint of the form $w^T x \leq c$. Further, we call a constraint unweighted if $w \in \{0,1\}^n$ and weighted if the coefficients take values other than $0$ and $1$. Standard versions of QAOA and many of its variants cannot handle constraints in a native way. Instead, they rely on an encoding of the optimization problem into a Hamiltonian $\mathcal{H}_{\text{final}} \in \mathbb{C}^{2^n \times 2^n}$ whose eigenvalues and eigenvectors correspond to bit strings and their values with respect to an objective function $f(x)$,~\emph{i.e.},
\begin{equation}
    \mathcal{H}_{\text{final}} \ket{x} = f(x)\ket{x} \text{ for all } x \in \{0,1\}^n\;.
\end{equation}
In the case of $2-$local Hamiltonians, this corresponds directly to formulating the optimization as a Quadratic Unconstrained Binary Optimization Problem (QUBO) with quadratic objective function ${f:\{0,1\}^n \rightarrow \mathbb{Z}}$.
The traditional way to cast a constrained optimization problem into this unconstrained form is to add a penalty term $p(x)$ to the objective function which evaluates to $0$ if the constraint is satisfied and to a high positive value if it is not. This is relatively easy to achieve for equality constraints $w^T x = c$ by choosing $p(x)=P(w^Tx - c)^2$ for $P>0$. If one chooses the penalty factor $P$ adequately, the replacement of the constraint with the penalty term yields an optimization problem with the same optimal solution and value. In contrast, handling inequality constraints $w^T x \leq c$ within the same framework presents is not straightforward. The commonly employed~\cite{Lucas_2014} solution is to introduce $l = \lceil\log_2(c)\rceil + 1$ slack variables $y_0,...,y_{l-1} \in \{0,1\}^n$ in order to transform the inequality into an equality. The new constraint then reads
\begin{equation}
    w^T x + \sum_{k=0}^{l-1} 2^k y_k = c\;.
\end{equation}
Then, one proceeds to cast it into a penalty term as outlined above, ${p(x)=P(w^T x + \sum_{k=0}^{l-1} 2^k y_k - c)^2}$. Although this method yields unconstrained problems whose minima coincide with those of the original problems, the introduction of penalty terms brings a host of practical issues.
As the size of the penalty grows with the violation of the constraint, the contributions of the penalty terms to the objective usually outsize the difference between feasible solutions making it hard for optimization algorithms to find optimal solutions. Further, in the case of inequality constraints the addition of slack qubits lead to an even larger fraction of invalid solutions and further constrain the problem sizes one can tackle on near term devices with a limited number of available qubits.

One possible remedy to the issues caused by penalty terms is to ensure the satisfaction of constraints not in the problem Hamiltonian $\mathcal{H}_{\text{final}}$ but in the mixer Hamiltonian $\mathcal{H}_M$. An example for this technique are so-called $XY$-mixers for unweighted equality constraints (also known as $k$-hot constraints) 
\begin{equation}\label{eq:k-hot-constraint}
    \sum_{i=0}^{n-1} x_i = k\;.
\end{equation}

\subsection{Definition of XY-Mixers}
Using an $XY$-mixer as the initial Hamiltonian in an adiabatic algorithm inherently enforces the $k$-hot constraint by conserving the total number of excited qubits (the Hamming weight of the state). A general $XY$-mixer Hamiltonian is parametrized by a set of qubit pairs $\mathcal{E} \subseteq \{(i,j) : 0 \leq i < j \leq n-1\}$:
\begin{equation}
\mathcal{H}_{XY}(\mathcal{E}) =\sum_{(i,j) \in \mathcal{E}} (X_i X_j + Y_i Y_j)\;.
\end{equation}

The choice of $\mathcal{E}$ determines which qubits interact and with it algebraic properties of the Hamiltonian and circuit complexity. The full connectivity version of the $XY$-mixer Hamiltonian uses all possible qubit pairs:
\begin{equation}
\mathcal{E}_{\text{full}} = \{(i,j) : 0 \leq i < j \leq n-1\}
\end{equation}
yielding:
\begin{equation}
\mathcal{H}_{XY}^{\text{full}}=\sum_{(i,j) \in \mathcal{E}_{\text{full}}}X_i X_j + Y_i Y_j\;.
\label{eq:xy_mixer}
\end{equation}
It commutes with the operator $\sum_i Z_i$ that counts the number of qubits in the $\ket{1}$ state. Starting the evolution in a state with exactly $k$ excitations (a superposition of $k$-hot states) therefore confines the dynamics to the subspace of states with Hamming weight $k$ throughout the adiabatic evolution.
Another way to see that the $XY$-mixer preserves $k$-hot constraints, we observe the action of each summand of $\mathcal{H}_{XY}$ on two qubits.
We have
\begin{equation}
\label{xy-simple eigenstates}
\begin{split}
(X_1X_2+Y_1Y_2) \ket{00} &= (X_1X_2+Y_1Y_2) \ket{11} = 0\;, \\
(X_1X_2+Y_1Y_2) \ket{01} &= \ket{10}+\ket{10} = 2\ket{10} \;,\\
(X_1X_2+Y_1Y_2) \ket{10} &= \ket{01}+\ket{01} = 2\ket{01}\;.
\end{split}
\end{equation}
Consequently, the corresponding unitary $\exp(-i\beta(X_1X_2+Y_1Y_2))$ will act as the identity on $\ket{11}$ and $\ket{00}$ and create superpositions of $\ket{10}$ and $\ket{01}$.
As each summand of $\mathcal{H}_{XY}$ acts on only two qubits, the behavior of a single $XY$-term extrapolates to the action of $U_{XY}(\beta)=\exp(-i\beta\mathcal{H}_{XY})$ which will swap every combination of $\ket{0}$ and $\ket{1}$ qubits in the initial state and thus leave the Hamming weight invariant.

The implicit preservation of the constraint makes an explicit $k$-hot penalty term in the problem Hamiltonian superfluous, and thus lets us circumvent the numerical problems that come with it.

\paragraph*{Implementation of XY-mixers:} In order to implement the mixing operator $\exp(-i \beta \mathcal{H}_{XY})$ on a gate-based quantum computer, it is necessary to employ a Trotterization scheme which approximates $\exp(-i \beta \mathcal{H}_{XY})$ with $\prod_{\forall (i, j) \in \mathcal{E}}\exp(-i\beta(X_iX_j+Y_iY_j))$. The individual exponential terms in the approximated product can be implemented exactly due to commutativity of $X_i X_j + Y_iY_j$.
A quantum circuit which implements $\exp(-i\beta X_i X_j)$ can be expressed using the $\mathrm{CNOT}$ and the rotation gate $R_x$, as shown below in Eq.~\eqref{eq:XY_circuit}.
\begin{equation}\label{eq:XY_circuit}
\exp(i\beta X_i X_j) = CX(i,j) R_x^{(j)}(2\beta)CX(i,j) \;,
\end{equation}

where $CX(i,j)$ denotes the $\mathrm{CNOT}$ gate with qubit $i$ as the control qubit and $j$ as the target qubit. We denote a rotation of the $j$-th qubit around the $X$ axis as $R_x^{(j)}$. For the implementation of $\exp(-i\beta Y_i Y_j)$, we simply need to replace the rotation gate $R_y^{(j)}(2\beta)$~\cite{fuchs_2022}.

\subsection{Eigenstates of XY-Hamiltonians} 
Typically, when $XY$-mixers are used, the quantum system is initialized in a Dicke state, a uniform superposition of $k$-hot basis states.
\begin{equation}
\ket{D_k^n} = \frac{1}{\sqrt{\binom{n}{k}}} \sum_{\substack{x \in \{0,1\}^n\\ |x|=k}} \ket{x}\;.
\end{equation}
It can be easily verified that $\ket{D_k^n}$ is an eigenstate of the $XY$ Hamiltonian. Its eigenvalue is given by $2k(n-k)$. To see this, note that in $\mathcal{H}_{XY} \ket{D^n_k}$, the $XY$ Hamiltonian generates $k(k-n)$ distinct excitation swaps, each such state with weight 2 (due to the action of both $XX$ and $YY$) as seen above in Eq~\eqref{xy-simple eigenstates}. There are $k(n-k)$ such bit strings which explains the eigenvalue. It turns out that a Dicke state is the eigenstate with the lowest eigenvalue for $\mathcal{H}_{XY}^{\text{full}}$, for the $k$-hot basis state.

\begin{lemma}\label{lemma:dicke_eigenstates} 
$\ket{D^n_k}$ is the highest energy eigenstate of $\mathcal{H}_{XY}^{\text{full}}$, for $k = \lfloor \frac{n}{2}\rfloor$. Furthermore, for any $1 \leq k \leq n$, $\ket{D^n_k}$ is the highest energy eigenstate of $\mathcal{H}_{XY}^{\text{full}}$ which lies entirely in the $k$-hot subspace, i.e. the space spanned by $\{\ket{x}: |x|=k\}$.
\end{lemma}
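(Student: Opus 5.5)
The plan is to express $\mathcal{H}_{XY}^{\text{full}}$ through collective spin operators and read its spectrum off from angular-momentum theory. Use $X_iX_j+Y_iY_j = 2(\sigma_i^+\sigma_j^- + \sigma_i^-\sigma_j^+)$ and set $S^\pm=\sum_i \sigma_i^\pm$, $S_z=\frac12\sum_i Z_i$, with total spin $\mathbf{S}^2 = S^+S^- + S_z^2 - S_z$ (up to the sign convention for $\ket{1}$). Then
\[
S^+S^- = \sum_{i}\sigma_i^+\sigma_i^- + \sum_{i\neq j}\sigma_i^+\sigma_j^- ,
\]
so
\[
\mathcal{H}_{XY}^{\text{full}} = \sum_{i<j}2(\sigma_i^+\sigma_j^-+\sigma_i^-\sigma_j^+) = 2\Bigl(S^+S^- - \sum_i \sigma_i^+\sigma_i^-\Bigr).
\]
Here $\sum_i\sigma_i^+\sigma_i^-$ is just the number operator, a function of $S_z$. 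Rewriting $S^+S^-$ through $\mathbf{S}^2$ gives
\[
\mathcal{H}_{XY}^{\text{full}} = 2\mathbf{S}^2 - 2S_z^2 - \tfrac{n}{2}\cdot 2,
\]
i.e.\ $\mathcal{H}_{XY}^{\text{full}} = 2\mathbf{S}^2 - 2S_z^2 - n$. I will verify the constant at the end by evaluating on $\ket{0\cdots0}$, where the energy must be $0$. Check: $\mathbf{S}^2 = \frac n2(\frac n2+1)$ and $S_z^2 = n^2/4$, giving $2\cdot\frac{n}{2} - n = 0$, which is consistent.

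Next, restrict to the $k$-hot subspace, which has fixed $S_z = m = \frac n2 - k$ (or its negative). On this subspace $\mathcal{H}$ is the function $2S(S+1) - 2m^2 - n$ of $\mathbf{S}^2$ alone. Since $\mathbf{S}^2$ has eigenvalues $S(S+1)$ with $|m|\le S\le n/2$, the energy is maximized exactly on the sector $S=n/2$. Within a fixed-$m$ subspace, the $S=n/2$ sector is one-dimensional and consists of the fully symmetric states. The symmetric states of Hamming weight $k$ are spanned by $\ket{D_k^n}$, since $(S^-)^k\ket{0\cdots0}\propto\ket{D^n_k}$ in the matching convention. Hence $\ket{D_k^n}$ is the unique highest-energy eigenvector within the $k$-hot subspace. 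Its eigenvalue is $2\frac n2(\frac n2+1) - 2(\frac n2-k)^2 - n = 2k(n-k)$, which matches the count stated before the lemma and serves as a sanity check on the algebra.

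For the first claim I compare across $k$. The maximal energy in the weight-$k$ block is $2k(n-k)$, and any eigenstate of the full Hamiltonian can be chosen inside a single weight block, because $\mathcal{H}$ commutes with $\sum_i Z_i$. The global maximum is therefore $\max_k 2k(n-k)$, attained at $k=\lfloor n/2\rfloor$. For odd $n$ it is also attained at $k=\lceil n/2\rceil$, so "the" highest eigenstate should be read as "a" highest eigenstate in that case.

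The main obstacle is the bookkeeping in the identity $\mathcal{H}=2\mathbf{S}^2-2S_z^2-n$, namely the signs and conventions for $\ket{1}$ versus spin up. Checking it on $\ket{0\cdots 0}$ and on $\ket{D_k^n}$ pins down the constants. I will also use the standard fact that within the $S_z=m$ eigenspace the multiplicity of $S=n/2$ is exactly one; this follows from the dimension count $\binom{n}{k}-\binom{n}{k-1}$ for each lower $S$, or simply from permutation symmetry of the $S=n/2$ multiplet.
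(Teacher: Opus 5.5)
Your proof is correct, and it takes a different route from the paper.

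The paper works directly with the computational-basis matrix. That matrix is real symmetric with zero diagonal, and its off-diagonal entries equal $2$ exactly for pairs of strings differing by one swap. So every row in the weight-$k$ block has absolute sum $2k(n-k)$. Gershgorin's theorem then bounds every eigenvalue of the block by $2k(n-k)$, and every eigenvalue of the whole operator by $\max_k 2k(n-k)$. The Dicke state attains the bound.

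You instead use permutation symmetry to write $\mathcal{H}_{XY}^{\text{full}} = 2\mathbf{S}^2 - 2S_z^2 - n$ and read off the spectrum from angular-momentum theory. This identity is exact, not just consistent with your two checks: with $\sigma^+=\ket{0}\bra{1}$ the number operator $\sum_i\sigma_i^+\sigma_i^-$ equals $\tfrac n2 + S_z$, and the linear $S_z$ terms cancel.

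What each approach buys:
\begin{itemize}
\item The Gershgorin argument is shorter and needs no representation theory. It also carries over to any mixer whose matrix on the $k$-hot block has nonnegative entries and constant row sums. However, it only gives an upper bound, so uniqueness of the top eigenvector would need an extra Perron--Frobenius/connectivity argument.
\item Your approach is tied to full connectivity, but it gives the entire spectrum of each block, namely $2S(S+1)-2(\tfrac n2-k)^2-n$ for $|\tfrac n2-k|\le S\le \tfrac n2$. From this you get uniqueness of $\ket{D_k^n}$ within its block, a derivation (rather than a count) of the eigenvalue $2k(n-k)$, and, as a free byproduct, the in-block gap $2n$ to the next level for $1\le k\le n-1$. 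That gap is exactly the quantity relevant to the adiabatic discussion at $s=0$.
\end{itemize}

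Your remark that for odd $n$ the global maximum is attained at both $k=\lfloor n/2\rfloor$ and $k=\lceil n/2\rceil$ is also correct. It matches the paper's own wording in the proof, ``a highest-energy eigenstate''.
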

\begin{proof}
It is known that the Dicke state $\ket{D_k^n}$ is an eigenstate of the fully connected $XY$-Hamiltonian
$\mathcal{H}^{\mathrm{full}}_{XY}$ with eigenvalue $2k(n-k)$. We show that no larger eigenvalue exists. We write $\mathcal{H}^{\mathrm{full}}_{XY}$ in the computational basis
$\{\ket{x} : x \in \{0,1\}^n\}$. The matrix is real, symmetric, and has zero diagonal entries. For $x \neq y$,
\begin{equation}
\langle x \lvert \mathcal{H}_{\mathrm{XY}}^{\mathrm{full}} \rvert y \rangle = 
\begin{cases}
2, & \text{if $x$ and $y$ differ by exchanging one $1$ and one $0$}\;, \\
0, & \text{otherwise}\;.
\end{cases}
\end{equation}
Fixing a basis state $\lvert x\rangle$ with Hamming weight $|x|=k$, there are exactly $k(n-k)$ such exchanges. Hence, the absolute row sum corresponding to $\lvert x\rangle$ is $2k(n-k)$. By Gershgorin’s circle theorem~\cite{Gershgorin1931}, every eigenvalue $\lambda$ of $\mathcal{H}^{\mathrm{full}}_{XY}$ satisfies 
\begin{equation}
\lambda \leq \max_{0 \leq k \leq n} 2k(n-k)\;,
\end{equation}
which is achieved for $k=\lfloor n/2 \rfloor$. Since
$\lvert D^n_{\lfloor n/2 \rfloor} \rangle$ realizes this eigenvalue, it is a highest-energy eigenstate.

For the second claim, note that $\mathcal{H}^{\mathrm{full}}_{XY}$ preserves Hamming weight. Let $P_k$ denote the projector onto the $k$-hot subspace $S_k = \mathrm{span}\{\lvert x\rangle : |x| = k\}$, and define
$H_k = P_k \mathcal{H}^{\mathrm{full}}_{XY} P_k$.
Within $H_k$, every row again has absolute sum $2k(n-k)$, implying $\lambda_{\max}(H_k) \le 2k(n-k)$. Since $\lvert D_k^n \rangle \in S_k$ is an eigenstate with this eigenvalue, it is the highest-energy eigenstate in the $k$-hot subspace.
\end{proof}

We have now found the highest energy eigenstates of $\mathcal{H}_{XY}^{\text{full}}$. In QAOA-like protocols we are usually interested in preparing the ground state of our mixing Hamiltonian.
For this reason, we will hereafter use the $XY$-mixer with a $-1$ coefficient, as our mixer Hamiltonian, as shown below in Eq.~\eqref{minus_coeff},
\begin{equation}
    \mathcal{H}_{XY}^{\text{full}} = -1\cdot \sum_{i < j} (X_i X_j + Y_i Y_j)\;.
    \label{minus_coeff}
\end{equation}
It follows from Lemma~\ref{lemma:dicke_eigenstates} that its lowest energy eigenstate in the $k$-hot subspace is given by $\ket{D_k^n}$.

\subsection{Choosing mixer connectivity}
The full $XY$-mixer has quadratic implementation cost in the number of variables. There exist alternatives to the full mixer with a linear cost. One example is the $XY$-Ring mixer, which implements the $XY$-Hamiltonian on a ring topology with periodic boundary condition and restricts the interaction to nearest neighbors with addition of the interaction between the last and the first qubit. The Hamiltonian in question is given by
\begin{equation}
    \mathcal{H}_{XY}^{\text{ring}} = -\sum_{(i,j) \in \mathcal{E}_{\text{ring}}} (X_i X_j + Y_i Y_j)\;,
\end{equation}
where $\mathcal{E}_{\text{ring}} = \{(i, i+1) : 0 \leq i \leq n-2\} \cup \{(n-1,0)\}$.
The ring mixer offers certain practical advantages for the implementation of QAOA. Most notable, the ring topology requires less qubits to be connected to each other than the full mixer, and can be implemented with a linear number of gates~\cite{Leipold_2021}. The $XY$-ring mixer also possesses both key characteristics required for constraint-preserving QAOA: it preserves the feasible subspace and it provides connectivity within this subspace~\cite{PhysRevA.101.012320}. However, as explained in the previous paragraph, it is important for the application of $XY$-mixers that the corresponding Hamiltonians have easy-to-prepare ground states. In this regard, the desirable initial states are Dicke states as efficient circuits for their preparation are known~\cite{Baertschi_2019, Baertschi_2022}. 
In general, Dicke states are not the eigenstates of $XY$-ring Hamiltonians. To see this, we consider an example of $2$-hot Dicke state $|D_2^4\rangle$. We analyze the connectivity structure by examining which states can transition to each other under nearest-neighbor swaps.

We define adjacency under the $XY$-ring Hamiltonian as follows. State $\ket{x}$ is said to be connected to state $\ket{y}$ if there exists $i$ such that $(X_i X_{i+1} + Y_i Y_{i+1})\ket{x} \propto \ket{y}$. Here, addition in the index of $X$ and $Y$ is modulo $n$. For the six $2$-hot basis states, the connectivity is:
\begin{center}
\begin{tabular}{|c|c|c|}
\hline
State & Connected to & Degree \\
\hline
$|1100\rangle$ & $|1010\rangle$, $|0101\rangle$ & 2 \\
$|1010\rangle$ & $|1100\rangle$, $|0110\rangle$, $|1001\rangle$, $|0101\rangle$ & 4 \\
$|1001\rangle$ & $|1010\rangle$, $|0101\rangle$ & 2 \\
$|0110\rangle$ & $|1010\rangle$, $|0101\rangle$ & 2 \\
$|0101\rangle$ & $|1100\rangle$, $|1010\rangle$, $|1001\rangle$, $|0110\rangle$ & 4 \\
$|0011\rangle$ & $|1010\rangle$, $|0101\rangle$ & 2 \\
\hline
\end{tabular}
\end{center}

When $\mathcal{H}_{XY}^{\text{ring}}$ acts on $|D_2^4\rangle$, each basis state receives contributions proportional to its degree. States with alternating patterns ($|1010\rangle$, $|0101\rangle$) have degree 4, while states with adjacent patterns ($|1100\rangle$, $|0110\rangle$, $|1001\rangle$, $|0011\rangle$) have degree 2. This non-uniform connectivity breaks the equal superposition. Therefore, the result is not proportional to $|D_2^4\rangle$, proving that Dicke states are not eigenstates of the $XY$-ring Hamiltonian, despite being eigenstates of the $XY$-full Hamiltonian which has uniform all-to-all connectivity. This mismatch between the natural eigenstates of the ring mixer and the constraint-preserving subspace defined by k-hot constraints makes it a poor fit for our optimization objectives.

Therefore, while the $XY$-ring mixer may offer hardware implementation advantages in certain contexts, we focus exclusively on the $XY$-full mixer for the remainder of this work. In the remainder of the manuscript, we use $\mathcal{H}_{XY}$ and $\mathcal{H}_{XY}^{\text{full}}$ interchangeably, referring to the fully connected $XY$-mixer.

\section{Quantum Optimization Algorithms with XY-mixers}\label{sec:TAE}

Most of the work for circuit model optimization is based on QAOA which has become one of the most promising NISQ‑era optimization methods. Its layered structure allows it to perform well on certain classes of problems~\cite{farhi2019,bravyi}, with experiments showing improved solution quality even on 100+ qubit superconducting processors at moderate depth~\cite{pelofske2026}. Despite this promise, the practical performance of QAOA remains inconsistent and its reliance on variational parameter optimization introduces significant overhead and instability. Parameter landscapes are often highly non‑convex, transferred parameters do not reliably yield monotonic improvement with depth, and classical heuristics can match or outperform QAOA for generic constraint‑satisfaction problems~\cite{limits_qaoa}.

These limitations point to a structural issue: while QAOA is inspired by the adiabatic algorithm, it only approximates adiabatic trajectories when parameters are tuned appropriately, something that becomes increasingly difficult as the number of layers and problem sizes grow. A more controlled alternative is to dispense with the variational loop entirely and instead perform Trotterized Adiabatic Evolution (TAE) on a gate‑based quantum computer using a designed annealing schedule. The Trotterized Adiabatic Approach provides mathematically rigorous convergence guarantees via the Trotter product formula and decomposes a continuous time‑dependent Hamiltonian into short, local gate sequences that align naturally with digital quantum hardware. By choosing an annealing schedule one obtains a systematic and theoretically grounded method for preparing low‑energy states while avoiding the instability and unpredictability of variational training~\cite{effective_inequalities}. A short description of TAE for circuit model quantum computing is described below.

\subsection{Trotterized Adiabatic Evolution (TAE)}\label{sec: trotterized adiabatic evolution}
QAOA is a variational algorithm for approximating solutions to combinatorial optimization problems~\cite{farhi,zhou}. Given an Ising Hamiltonian $\mathcal{H}_{\text{final}}$ whose ground state encodes the optimal solution, QAOA alternates between two unitaries:
\begin{itemize}
 \item the \textit{phase separator} $U(\mathcal{H}_{\text{final}},\gamma)=\exp\left[-i\gamma \mathcal{H}_{\text{final}}\right]$ with $\gamma\in[0,2\pi]$,
 \item the \textit{mixer} $U(\mathcal{H}_{\text{init}},\beta)=\exp\left[-i\beta \mathcal{H}_{\text{init}} \right]$ with $\beta\in[0,\pi]$, where $\mathcal{H}_{\text{init}}$ is an initial Hamiltonian whose ground state $\ket{\psi_0}$ is easy to prepare.
\end{itemize}
Starting from the mixer ground state $\ket{\psi_0}$ (appropriate for minimization) the algorithm applies $p$ alternating layers of these operators, yielding the variational state
\begin{align}
 \ket{\psi_{\boldsymbol{\gamma},\boldsymbol{\beta}}}
 &= U(\mathcal{H}_{\text{init}},\beta_p)\, U(\mathcal{H}_{\text{final}},\gamma_p)\cdots U(\mathcal{H}_{\text{init}},\beta_1)\, U(\mathcal{H}_{\text{final}},\gamma_1)\ket{\psi_0}\;,\\
 &= \prod_{l=p}^{1} U(\mathcal{H}_{\text{init}},\beta_l)\, U(\mathcal{H}_{\text{final}},\gamma_l) \ket{\psi_0}\;,
 \label{eq:qaoa_state}
\end{align}
parameterized by $\gamma_1,\dots,\gamma_p$ and $\beta_1,\dots,\beta_p$.\footnote{Note that the limits of the product operator start from $l=p$ to $1$. This is to depict the fact that the unitary operators with parameters for $l=1$ need to be applied to the initial state first.}
Measuring in the computational basis allows estimation of the cost function as the expectation value $\bra{\psi_{\boldsymbol{\gamma},\boldsymbol{\beta}}}\mathcal{H}_{\text{final}}\ket{\psi_{\boldsymbol{\gamma},\boldsymbol{\beta}}}$ and iterative classical optimization of $(\boldsymbol{\gamma},\boldsymbol{\beta})$ to minimize the expectation value should ideally result in yielding optimal or near-optimal solutions.
The TAE, occasionally referred to as digitized quantum annealing~\cite{Kovalsky_2023,kockum2024lecture,Hegade_2022} approximates continuous adiabatic evolution, governed by an annealing time $\mathcal{T}$ and schedule function ${s:[0,\mathcal{T}] \rightarrow[0,1]}$, by discretizing the time-dependent Hamiltonian
\begin{equation}
\mathcal{H}(t) = (1-s(t)) \mathcal{H}_{\text{init}} + s(t)\mathcal{H}_{\text{final}}\;.
\label{adb1}
\end{equation}
The exact evolution operator is
\begin{equation}
\hat U(\mathcal{T}) = \tau \exp\!\left[-i \int_0^{\mathcal{T}}\mathcal{H}(t)\, dt\right]\;,
\end{equation}
where $\tau$ is the time order operator. This unitary is usually approximated by using $p$ Trotter steps of size $\delta t=\mathcal{T}/p$, where the time used in trotter step $l$ is $t=l\cdot \delta t$. This discretization leads to 
\begin{equation}
\hat U(\mathcal{T}) \approx \prod_{l=p}^{1} \exp\!\left[-i\mathcal{H}(l\cdot \delta t)\,\delta t\right]\;.
\label{adb2}
\end{equation}
Applying a first-order Suzuki--Trotter expansion yields
\begin{equation}
\hat U(\mathcal{T}) \approx
\prod_{l=p}^{1}
\exp\left[-i(1-s(l\cdot\delta t))\mathcal{H}_{\text{init}}\delta t\right]\cdot
\exp\left[-i s(l\cdot\delta t)\mathcal{H}_{\text{final}}\delta t\right]\;.
\label{adb4}
\end{equation}
With the ground state $\ket{\psi_0}$ of $\mathcal{H}_{\text{init}}$ the ground state of $\mathcal{H}_{\text{final}}$ is approximated as $\hat U(\mathcal{T})\ket{\psi_0}$. This form directly parallels QAOA, with  $\beta_l=(1-s(l\cdot\delta t))\delta t$ and $\gamma_l=s(l\cdot\delta t)\delta t$, recovering the QAOA state
\begin{equation}
\ket{\psi_{\boldsymbol{\gamma},\boldsymbol{\beta}}} = \prod_{l=p}^{1} \exp\left[-i\beta_l\mathcal{H}_{\text{init}}\right]\cdot \exp\left[-i\gamma_l\mathcal{H}_{\text{final}}\right] \ket{\psi_0}\;,
\label{qaoa_trotter}
\end{equation}
showing that $p$ Trotter steps are referred to as the number of layers in QAOA. The key distinction is that TAE uses fixed time segments for a given evolution schedule $s$, whereas QAOA treats the parameters variationally and therefore requires classical optimization which brings in an added complexity of optimizing those variational parameters.

\subsection{XY-mixer as Initial Hamiltonian for Adiabatic Evolution}
For optimization problems containing {$k$-hot} constraints, we employ the $XY$-mixer as $\mathcal{H}_{\text{init}}$ and initialize the system in the Dicke state $\ket{D^n_k}$, the lowest-energy eigenstate of the XY Hamiltonian within the $k$-hot subspace (Lemma~\ref{lemma:dicke_eigenstates}). This approach eliminates the need for $k$-hot penalty terms in the problem Hamiltonian $\mathcal{H}_{\text{final}}$. Although the global ground state of $\mathcal{H}_{XY}$ is not generally $\ket{D^n_k}$ for all $k$'s, the $XY$ mixer conserves Hamming weight, rendering each $k$-hot subspace invariant. Adiabatic evolution therefore proceeds entirely within the relevant block, and success depends only on the spectral gap within that sector.

According to the adiabatic theorem generalized to eigenstate evolution, if the system starts in the $j$-th eigenstate of $\mathcal{H}_{\text{init}}$ with a finite energy gap to other eigenstates maintained throughout the interpolation $\mathcal{H}(t)=(1-s(t))\mathcal{H}_{XY} + s(t) \mathcal{H}_{\text{final}}$, it will remain in the $j$-th eigenstate of $\mathcal{H}(t)$ at all times. Here, $j$ corresponds to the lowest eigenstate in the $k$-hot subspace. Therefore, the relevant spectral gap is only the gap \emph{within} the $k$-hot manifold, not the global gap. As long as this gap remains sufficiently large relative to the evolution rate, the system adiabatically evolves from $\ket{D^n_k}$ to the ground state of the $k$-hot subspace of $\mathcal{H}_{\text{final}}$. This approach has been discussed in the literature on constraint-preserving adiabatic algorithms and QAOA mixers~\cite{Hadfield_2019, Wang_2020}.

\subsubsection{Trotter error for XY-mixers}\label{sec:trotter_error}
The fully connected XY-mixer Hamiltonian is constructed as a sum of pairwise exchange (swap) operators: 
\begin{equation}
\mathcal{H}_{XY}= -1\sum_{0\leq i<j\leq n-1} \Big(X_i X_j + Y_i Y_j\Big)\;,
\end{equation}
where $X_i, Y_i$ are Pauli operators on qubit $i$, and the sum runs over qubit pairs $(i,j)$ that define allowable exchanges (typically all pairs for a fully-connected mixer, or a subset for structured mixers). Unlike the simple $X$-mixer (which is $\mathcal{H}_M = \sum_{i=0}^{n-1} X_i$ and consists of single-qubit $X$ rotations that all commute with each other), the XY-mixer $\mathcal{H}_{XY}$ is a sum of many two-qubit terms that generally do not commute with one another. Each exchange term $\mathcal{H}_{ij}=(X_iX_j+Y_iY_j)$ does not commute with other terms that share a qubit in common, e.g. $[\mathcal{H}_{ij}, \mathcal{H}_{ik}]\neq0$ when $j\neq k$. In a fully-connected mixer on $n$ qubits, each qubit participates in $n-1$ pair-terms, and there are $n(n-1)/2$ total terms; almost every term overlaps with many others, leading to a dense non-commutativity graph. As a result, the unitary $U_M(t)=e^{-it \mathcal{H}_{XY}}$ cannot be realized as a single gate; it must be decomposed into a sequence of operations, using Trotterization. However, due to the non-commutativity, the Trotter decomposition is not exact and incurs an error. For Trotterization, one splits $\mathcal{H}_{XY}=\sum_{a} \mathcal{H}_a$ into manageable pieces (for instance, individual pair interactions or sets of commuting interactions) and applies them in sequence:
\begin{equation}
e^{-i \mathcal{H} t} = e^{-i \sum_a \mathcal{H}_a t} \approx \prod_a e^{-i \mathcal{H}_a t} + \mathcal{O}\left(t^2 \sum_{a < b} \| [\mathcal{H}_a, \mathcal{H}_b] \| \right)\;.
\label{eq:trotter_error}
\end{equation}
The magnitude of the Trotter error is governed by the norms of these commutators, in first‑order Suzuki–Trotter approximation, which is on the order of $\mathcal{O}\left(t^2 \sum_{a < b} \| [\mathcal{H}_a, \mathcal{H}_b] \| \right)$, as shown in Eq.~\eqref{eq:trotter_error}. In a worst-case scenario like a fully-connected $\mathcal{H}_{XY}$ on all $n$ qubits, the number of non-commuting term-pairs $(\mathcal{H}_a,\mathcal{H}_b)$ goes to $O(n^2)$. Consider an equality constraint acting of $6$ variables:
\begin{equation}
x_0 + x_1 + x_2 + x_3 + x_4 + x_5 = k \;.
\end{equation}
To preserve this constraint, the XY-mixer must act on all six qubits:
\begin{equation}
\mathcal{H}_M^{\text{global}} = -\sum_{0 \leq i < j \leq 5} (X_i X_j + Y_i Y_j)= -\sum_{0 \leq i < j \leq 5} \mathcal{H}_{ij}\;.
\end{equation}
The trotter error during unitarization of the above Hamiltonian is $\mathcal{O}\left(t^2 \sum_{i < j < k} \| [\mathcal{H}_{ij}, \mathcal{H}_{ik}] \| \right)$. For a fully-connected $\mathcal{H}_{XY}$ on $n$ qubits, the number of non-commuting term-pairs $(\mathcal{H}_{ij},\mathcal{H}_{ik})$ grows rapidly with $n$, leading to $O(n^3)$ total non-commuting pairs for the full mixer.
Now consider the case of separate $k$-hot constraints acting on disjoint sets of variables.
\begin{align}
x_0 + x_1 + x_2 &= k_1\;,\text{ and} \\
x_3 + x_4 + x_5 &= k_2\;.
\end{align}
Each constraint acts on a disjoint subset of qubits. The mixer Hamiltonian becomes:
\begin{equation}
\mathcal{H}_M^{\text{local}} = \mathcal{H}_{012} + \mathcal{H}_{345}\;,
\end{equation}
where,
\begin{align}
\mathcal{H}_{012} &= -\sum_{i<j \in \{0,1,2\}} (X_i X_j + Y_i Y_j)\;, \\
\mathcal{H}_{345} &= -\sum_{i<j \in \{3,4,5\}} (X_i X_j + Y_i Y_j)\;.
\end{align}
Since $\mathcal{H}_{012}$ and $\mathcal{H}_{345}$ act on disjoint qubits, they commute:
\begin{equation}
[\mathcal{H}_{012}, \mathcal{H}_{345}] = 0\;.
\end{equation}
Thus, the total Trotter error in this case is much smaller due to the reduced number of non-commuting terms in each Hamiltonian. In this work we study different problems with XY-mixers showing the effect of Trotter errors for different $k$-hot constraints over varying number of variables in equality constraints.

\section{Studied Combinatorial Optimization Problems} 
In this section we state the combinatorial optimization problems for which we present and discuss results in Section~\ref{sec:results}.
\subsection{Portfolio Optimization}\label{subsec:portfolio_opt}

Portfolio optimization is a fundamental problem in finance, where the goal is to select a combination of assets that balances expected return and risk, see \cite{Markowitz_1952}.
In the following, we discuss the so-called mean-variance portfolio optimization problem, where capital is equally distributed among the selected assets, see also \cite{He_2023}.
Let $n$ be the number of available assets. We consider binary decision variables $x_i \in \{0,1\}$ for $i \in \{0, 1, \dots, n-1\}$, where $x_i = 1$ if asset $i$ is included in the portfolio and $x_i = 0$ otherwise. Let $r_i$ be the expected return of asset $i$, and let $S$ be the covariance matrix of the assets. Additionally we introduce a risk factor $\mu>0$ to balance risk and return. We denote by $k$ the number of assets to be selected. Under the assumption of equal investment in selected assets, the objective is to select a set of assets that maximizes return and minimizes risk.
The corresponding binary optimization model then reads
\begin{align}
    \min_{\mathbf{x}\in \{0,1\}^n} \quad & \mu \mathbf{x}^\top S\mathbf{x} - r^\top \mathbf{x}\;,\\
    \text{s.t.} \quad       & \mathbf{1}^\top \mathbf{x} = k\;.
\end{align}
The objective function approximates the total portfolio variance under the equal-weight assumption. This binary formulation is attractive due to its simplicity and applicability in scenarios with limited investment options or cardinality constraints.

The initial Hamiltonian for the equality constraint is defined as the $XY$-mixer, such that
\begin{equation}
\mathcal{H}_{\text{init}} = -1\cdot \sum_{\forall 0\leq i<j\leq n-1} X_iX_{j}+Y_i Y_{j}\;.
\end{equation}

The final Hamiltonian again contains only the objective term in its Ising form. The initial state is prepared as an equal superposition of all $k$-hot states using the Dicke states.

\subsection{Multi-Car Paint Shop Problem (MCPS)}\label{subsec:mcps}
The Multi Car Paint Shop (MCPS) problem arises in automotive manufacturing, where a fixed sequence of cars enters a paint shop and must be painted with different colors~\cite{Yarkoni2021}. The primary objective is to minimize the number of color switches in the sequence, to reduce setup cost and waste. In practice, the number of cars of each type that must receive each color is predetermined by customer orders. In this work we consider the MCPS problem where the total number of different colors is fixed to two, namely, red and blue. Formally, let
\begin{itemize}
  \item \(n\) be the total number of cars in the given fixed sequence,
  \item \(\mathcal{C} = \{C_0, C_1, \ldots, C_{m-1}\}\) denote the car ensembles (groups of cars with the same features). In particular, the cars in any ensemble $C_j$ do not necessarily appear consecutively in the given fixed sequence.
  \item \(k(C_q)\) denote the number of cars in the ensemble \(C_q\) that must be painted blue, while the remaining $|C_q|-k(C_q)$ cars must be painted red.
\end{itemize}

Let, \(x_i \in \{0,1\}\) denote the binary decision variable for car \(i\), where
  \[
    x_i = 
    \begin{cases}
      1, & \text{if car } i \text{ is painted blue}\;, \\
      0, & \text{if car } i \text{ is painted red}\;.
    \end{cases}
  \]
The optimization problem then reads
\begin{equation}
\begin{split}
  \min_{x\in \{0,1\}^n} \quad & \sum_{i=0}^{n-2} (x_i+x_{i+1}-2x_ix_{i+1}) \;,\\
  \text{s.t.} \quad & \sum_{i \in C_q} x_i = k(C_q), \quad \forall q = 0, \dots, m-1\;.
\end{split}
\label{eq:mcps_constraints}
\end{equation}
The objective term gets a zero for any two adjacent cars assigned the same color, while a different assignment of colors to adjacent cars would fetch a $1$. The equality constraint respects the number of blue (and hence red) cars in each ensemble.

As described above, we optimize the MCPS problem using TAE, where the initial Hamiltonian is defined as the $XY$-mixer for each ensemble $q$, whose $k(C_q)$-hot ground state will always preserve the equality constraint. Formally, we define the initial Hamiltonian for MCPS as,
\begin{equation}
\mathcal{H}_{\text{init}} = - \sum_{q=1}^m\sum_{\substack{i<j:\\ i,j\in C_q}} X_iX_j+Y_i Y_j\;.
\label{mixer:mcps}
\end{equation}
This mixer Hamiltonian will always satisfy the $k(C_q)$-hot constraint over all $q$'s. 

The final Hamiltonian $\mathcal{H}_{\text{final}}$ again contains only the objective term in its Ising form, without any penalty term for the equality constraints. The initial state is prepared using the Dicke states, which is the equal superposition of all the feasible states. Note that in the above equality constraint, for each ensemble, we have a set of variables which are independent to any other ensemble, and thus applying the Dicke state for each ensemble separately, provides us the superposition of all feasible states for the complete problem. We then plug the Hamiltonian terms in Eq.~\eqref{qaoa_trotter}, and optimize the problem by TAE, explained in Section~\ref{sec: trotterized adiabatic evolution}.

\subsection{Multi-Commodity Flow Problem (MCFP)}\label{sec:mcf}
Multi-Commodity Flow problems (MCFP) have applications in economically impactful areas like routing passengers through a transport network or planning production along a supply chain network.
We consider a very specific binary version of the problem in a path based formulation.
We are given a directed graph $G=(V,E)$. Each edge $e\in E$ is assigned with a capacity $u_e$ and a cost $c_e$. There are $K$ different commodities $k\in \{0,1,\dots,K-1\}$, which need to pass through some of the edges from a source $s_k\in V$ to a sink $t_k\in V$ and consume a demand $d_k$ on each edge they flow through. In this version of the problem we pre-compute all simple paths for the source-target pair of each commodity $k$ and denote this set of paths by $\mathcal{P}_k$. 
We use the decision variables

\begin{equation*}
x_{k,p}= \begin{cases}
1, & \text{if commodity $k$ flows through a path $p \in \mathcal{P}_k$}\;,\\
0, & \text{otherwise} \;.
\end{cases}
\end{equation*}
With these in total $n=\sum_k |\mathcal{P}_k|$ decision variables, we want to minimize the objective function
\begin{equation}
\min \sum_{k=0}^{K-1} \sum_{p\in\mathcal{P}_k} \sum_{e\in p}  c_{e} \cdot x_{k,p}\;,
\end{equation}
while fulfilling the capacity constraints
\begin{equation}
\sum_{k=0}^{K-1} \sum_{\substack{p \in \mathcal{P}_k: \\ e \in p}} d_k \cdot x_{k,p} = u_{e}, \quad \forall e\in E\;.
\end{equation}
and the requirement that only one path is used for every commodity
\begin{equation}
\sum_{p\in \mathcal{P}_k} x_{k,p} = 1, \quad \forall k=0,\dots K-1\;.
\end{equation}
The latter equality can again be fulfilled with $XY$-mixer. For each commodity $k$ we define the initial Hamiltonian $\mathcal{H}_{\text{init}}$ as
\begin{equation}
\mathcal{H}_{\text{init}} = - \sum_{k=0}^{K-1} \sum_{\substack{p,p'\in \mathcal{P}_k}}  X_{p}X_{p'}+Y_{p} Y_{p'}\;.
\end{equation}

For this particular problem, we consider the capacity constraints as a penalty term in the objective function which is then converted to a final Hamiltonian in the Ising form. We again use the Dicke states as the initial states for each commodity $k$. It again is important to note that the sets of paths for each commodity are disjoint to the paths of any other commodity. Hence, we can create the complete initial state by applying the Dicke state for each commodity separately.

Our specific version of the MCFP is still NP-hard, even on graphs with polynomially many paths. 
We show this by the following reduction from the well-known \textsc{partition} problem: 
For a multiset of natural numbers S, the task is to find a partition into subsets $S_1$ and $S_2$ such that the sum of the numbers in $S_1$ is equal to the sum of the numbers in $S_2$. 
Given an instance of \textsc{partition}, we can easily construct a graph such that every feasible multi-commodity flow yields a solution to \textsc{partition}.
This graph consists of two nodes $s$ and $t$ and two parallel arcs from $s$ to $t$. Both arcs have unit cost and a capacity of half the sum of the numbers in $S$. For every number in $S$ we define a commodity with demand equal to that number. 
If a feasible multi-commodity flow exists, it distributes the commodities, i.e. numbers from $S$, equally on both arcs, thus giving a solution to \textsc{partition}.

\section{Results} \label{sec:results}
We now present comprehensive results and their comparisons for three different optimization problems, utilizing the $XY$-mixers with TAE, as presented in Section~\ref{sec: trotterized adiabatic evolution}. The influence of constraint-preserving mixers being the major aspect of this work, we compared between the usual $X$ and $XY$-mixers using the TAE algorithm. In all our experiments, we created a superposition of all the feasible states as the initial state for TAE and carried out full-state vector simulations to clearly understand the influences of different mixer operators, avoiding any bias arising due to sampling errors. All our tests were carried out using the sinusoidal scheduling function $s:[0,\mathcal{T}] \rightarrow [0,1]$ given by
\begin{equation}
    s(t) = \sin^2\left(\frac{\pi}{2} \sin^2\left(\frac{\pi t}{2 \mathcal{T}}\right)\right)\;.
    \label{sine_schedule}
\end{equation}
In all our experiments we tested the effectiveness of the mixer Hamiltonians by varying the trotter error with the help of $\delta t$ over different number of trotter steps $p$. The annealing time can be computed as $\mathcal{T}=p\cdot \delta t$. We varied the $\delta t$ values from $0.001$ to $0.9$ over different $p$ values.

\subsection{Results: Portfolio Optimization}~\label{res:portfolio}
This section explains our numerical results for the Portfolio Optimization problem introduced in Section~\ref{subsec:portfolio_opt}.

\begin{figure}[H]
\includegraphics[width=0.98\textwidth]{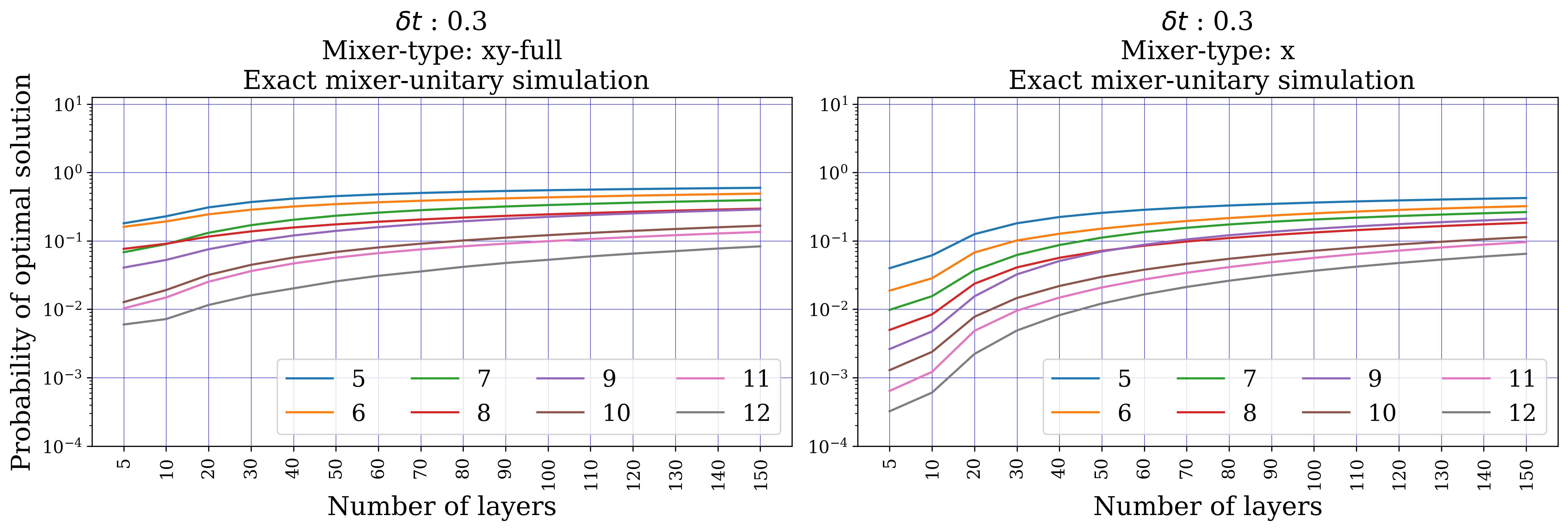}
\includegraphics[width=0.98\textwidth]{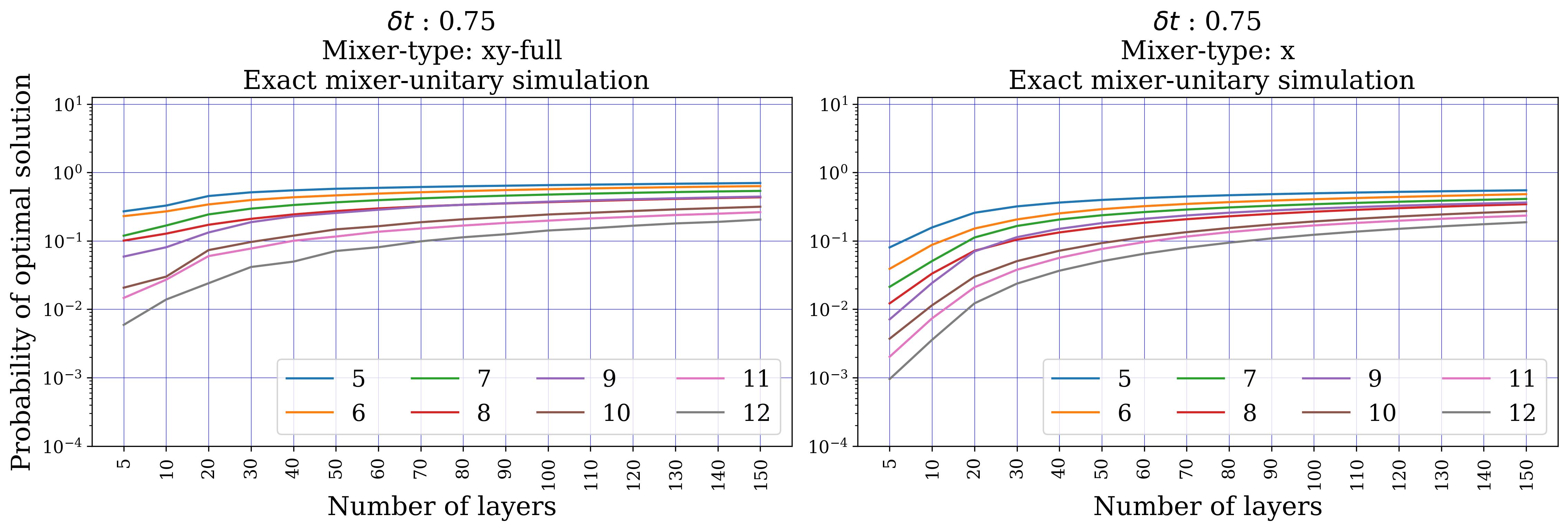}
\caption{\small{This figure demonstrates the breakdown of $XY$ and $X$-mixers performance under TAE. Results for Portfolio Optimization with exact unitary evolution,~\emph{i.e.}, without Trotterization of the mixing and phase separation unitaries but with exact computation of the unitary matrices in Eq.~\eqref{qaoa_trotter}. The plot shows results for $\delta t = 0.3$ and $\delta t = 0.75$, using standard $X$-mixers and the $XY$-mixers with full connectivity. In each plot the $x$-axis shows the number of Trotter steps used in the respective experiment and the $y$-axis shows the probability of measuring the optimal solution which was determined classically beforehand.}}
\label{fig:portfolio_opt_no_error}
\end{figure}
For each problem size of $n\in \{5,6,\dots,11,12\}$ assets we analyzed $10$ randomly generated instances.
The number of assets that a portfolio is allowed to hold varied between $k=\lfloor n/5\rfloor$ and $k=\lfloor 2n/3\rfloor$. The TAE algorithm was executed in two approximation modes. We executed both protocols with the standard $X$-mixers, i.e. $\mathcal{H}_\text{init} = \sum_{i=0}^{n-1} X_i$ as well as for the $XY$-mixers $\mathcal{H}_\text{init}=\mathcal{H}_{XY}^\text{full}$ (full connectivity). Since $XY$-ring can miss some $k$-hot states as explained in Section~\ref{sec:xy-mixers}, we decided not to use $XY$-ring for benchmarking.

In the case of $X$-mixers, the problem's $k$-hot constraint has to be accounted for by adding a penalty term $P\left(\sum_{i=0}^{n-1} x_i - k\right)^2$ to the objective. The factor $P$ has to be chosen such that any profit gained by violating the constraint is negated by the penalty. To this end, we choose ${P=1000}$ for all instances as the maximum return minus the additional risk across all assets does not exceed this number. For the Portfolio optimization problem, we executed TAE with $\delta t$ from the set $\{0.01,0.1,0.2,0.3,0.5,0.75,0.8,0.9\}$ and the annealing scheduling as given in Eq.~\eqref{sine_schedule}. All TAE protocols were run with a number of Trotter steps $p$ from the set on a statevector simulator provided by the quantum information SDK Qiskit~\cite{qiskit_2022}. 

First we simulate the mixer and phase separation unitaries $\exp(-i\beta \mathcal{H}_\text{init})$ and $\exp(-i\gamma\mathcal{H}_\text{final})$ exactly, i.e., without an approximation by Trotterizing, as shown in Figure~\ref{fig:portfolio_opt_no_error}, for $\delta t=0.3$ and $\delta t = 0.75$. All the test results over the mentioned $\delta t$ and $p$ are presented in Appendix~\ref{app_portfolio}. However, it can be clearly seen from Figure~\ref{fig:portfolio_opt_no_error} that increasing $\delta t$ improves the probability of optimal solutions both for $X$ and $XY$-mixers. The reason for this is that increasing $\delta t$ for any given $p$ increases the annealing time and thus provides a longer evolution time. Moreover, the results in Figure~\ref{fig:portfolio_opt_no_error} are obtained by exact simulation of the unitary operators $\exp\left[-i\beta_l\mathcal{H}_{\text{init}}\right]$ and $\exp\left[-i\gamma_l\mathcal{H}_{\text{final}}\right]$ in Eq.~\eqref{qaoa_trotter}, without any trotterization. Thus, increasing the $\delta t$ simply results in higher annealing time. The results without Trotterization showcase the potential of constraint-preserving mixers and especially $XY$-mixers. For both the $\delta t$ values, the $X$-mixer version is outperformed, especially in the regime of low number of Trotter steps $p$. As mentioned above, the results over all the $\delta t$ values can be found in the appendix of this paper.

\begin{figure}[H]
\includegraphics[width=0.98\textwidth]{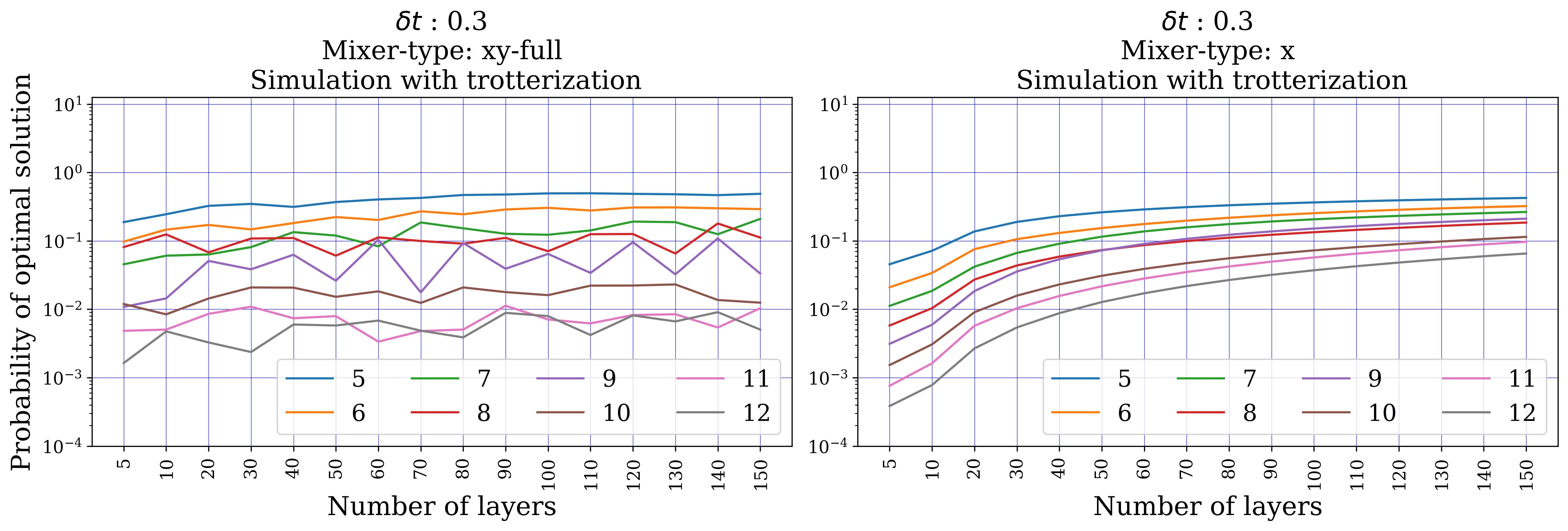}
\includegraphics[width=0.98\textwidth]{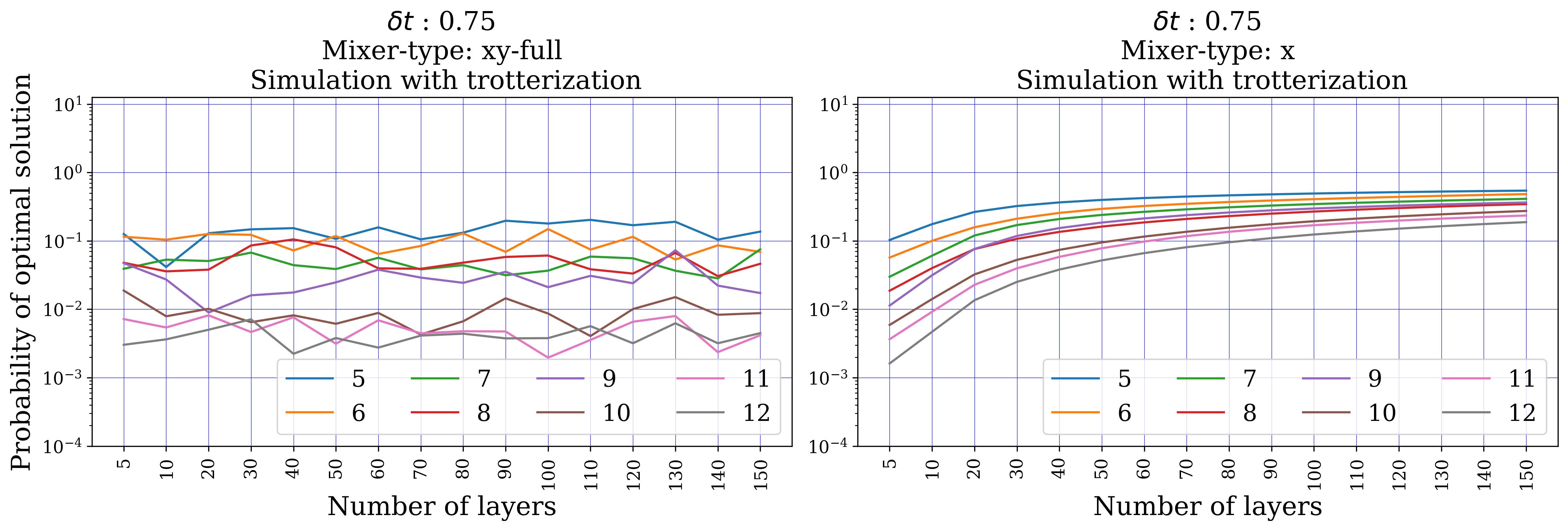}
\caption{\small{Results for Portfolio Optimization with approximated unitary evolution via Trotterization of the mixing and phase separation unitaries. The plot shows results for two different $\delta t = 0.3$ and $0.75$, using standard $X$-mixers as well and the $XY$-mixers with full connectivity. In each plot the $x$-axis shows the number of Trotter steps used in the respective experiment and the $y$-axis shows the probability of measuring the optimal solution which was determined classically beforehand. The probability of optimal solution for any problem size is averaged over $10$ different instances of the same size.}}
\label{fig:portfolio_opt_trotter_error}
\end{figure}

Next, we examine the more realistic mode where mixer and phase separator are approximated by Trotterization as described in Section~\ref{sec:TAE} in order to facilitate their decomposition into primary gates. 
When we test the mixer Hamiltonians with trotterization we observe that the picture changes completely. Especially for the larger problems, the probability of measuring an optimal solution when using the $XY$-mixer no longer grows with the number of Trotter steps and generally shows an erratic behavior, with trotter error playing a major role for such a behavior, and a shift in the quality of results compared to the exact simulation tests in Figure~\ref{fig:portfolio_opt_no_error}. In contrast, the standard TAE version with an $X$-mixer shows a very similar behaviour to that of the perfect simulation scenario. The results are shown in figure~\ref{fig:portfolio_opt_trotter_error}. The reason for this is that the $X$-mixer does not suffer from any trotter error and hence performs similar to an exact simulation. We deduce that the highly erratic behavior of results for the $XY$-mixer is due to the large number of summands in the $XY$-Hamiltonian which lead to a significant error from the Trotter approximation of its exponential. This error propagates even further as the number of Trotter steps increases. We stress that the Trotter error is especially significant for the Portfolio Optimization Problem due to its constraint structure with a single $k$-hot constraint encompassing all problem variables. The reason behind the exact simulation of the unitary operators for the portfolio optimization is to highlight the main issue of trotter error. As explained in Section~\ref{sec:trotter_error}, that for a $k$-hot constraint the trotter error increases as we increase the number of variables in a given equality constraints. Portfolio optimization in specific contains a single equality constraint containing all the problem variables.

In the next sections, we present the results for two more problems where we have more than one equality constraint, with any equality constraint containing only a subset of variables that do not appear in any other equality constraints. Apart from the analyses in Section~\ref{sec:trotter_error}, we will show with the help of some benchmark results that the trotter-error with $XY$-mixers reduces considerably and the results obtained are better than the $X$-mixer by several orders of magnitude. For further tests on other optimization problems we do not carry out tests for exact simulation of unitary matrices, but only the practical way of implementing the $XY$-mixers using Trotterization.

\subsection{Results: Multi-Car Paint Shop (MCPS) Problem }\label{sec:results_mcps}
In this section we present numerical results for the MCPS problem introduced in Section~\ref{subsec:mcps}. The MCPS problem is particularly well suited for evaluating constraint-preserving mixers, as it features multiple disjoint equality constraints, each acting only on a small subset of decision variables corresponding to car ensembles. This structural property fundamentally distinguishes MCPS from the Portfolio Optimization problem studied in Section~\ref{res:portfolio} and has important consequences for the accumulation of Trotter error when using $XY$-mixers. As described in Section~\ref{subsec:mcps}, the MCPS formulation contains a set of independent constraints of the form $\sum_{i \in C_q} x_i = k(C_q)$, with $q = 0,\dots,m-1$, where each ensemble $C_q \subset \{0,\dots,n-1\}$ is disjoint from all others. Accordingly, the mixer Hamiltonian decomposes as a sum of independent local $XY$-mixers, as shown in Eq.~\eqref{mixer:mcps}, where each term acts exclusively on the variables belonging to a single ensemble. As shown in Section~\ref{sec:trotter_error}, such a decomposition significantly reduces the number of non-commuting interaction terms in the mixer Hamiltonian and thereby suppresses the growth of trotter-error.

Representative results for the MCPS problem with problem sizes ranging from $5$ to $20$ qubits (with $10$ different instances for each problem size), are shown in Figure~\ref{fig:mcps_results} for $\delta t = 0.25$ and $\delta t = 0.5$, comparing TAE with $XY$-mixers and $X$-mixers over increasing problem sizes and Trotter depths. In stark contrast to the Portfolio Optimization results of Section~\ref{res:portfolio}, the $XY$-mixer consistently and significantly outperforms the $X$-mixer across all tested problem sizes. This performance advantage persists even in the presence of Trotterization and becomes more pronounced as the number of Trotter steps increases. For moderate to large depths ($p \gtrsim 10$), the probability of sampling an optimal solution using the $XY$-mixer is often several orders of magnitude larger than for the $X$-mixer. Further results for MCPS over several $\delta t$ values for both the $XY$ and $X$-mixers can be found in Appendix~\ref{appendix_mcps}.

\begin{figure}[H]
\includegraphics[width=0.98\textwidth]{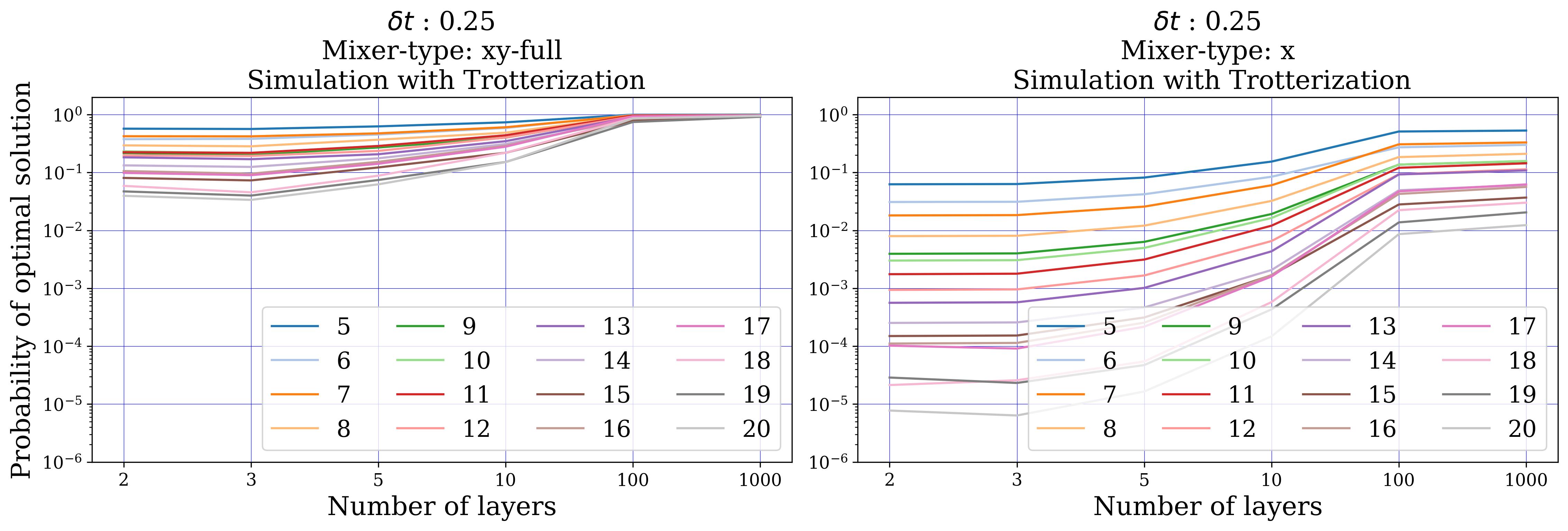}
\includegraphics[width=0.98\textwidth]{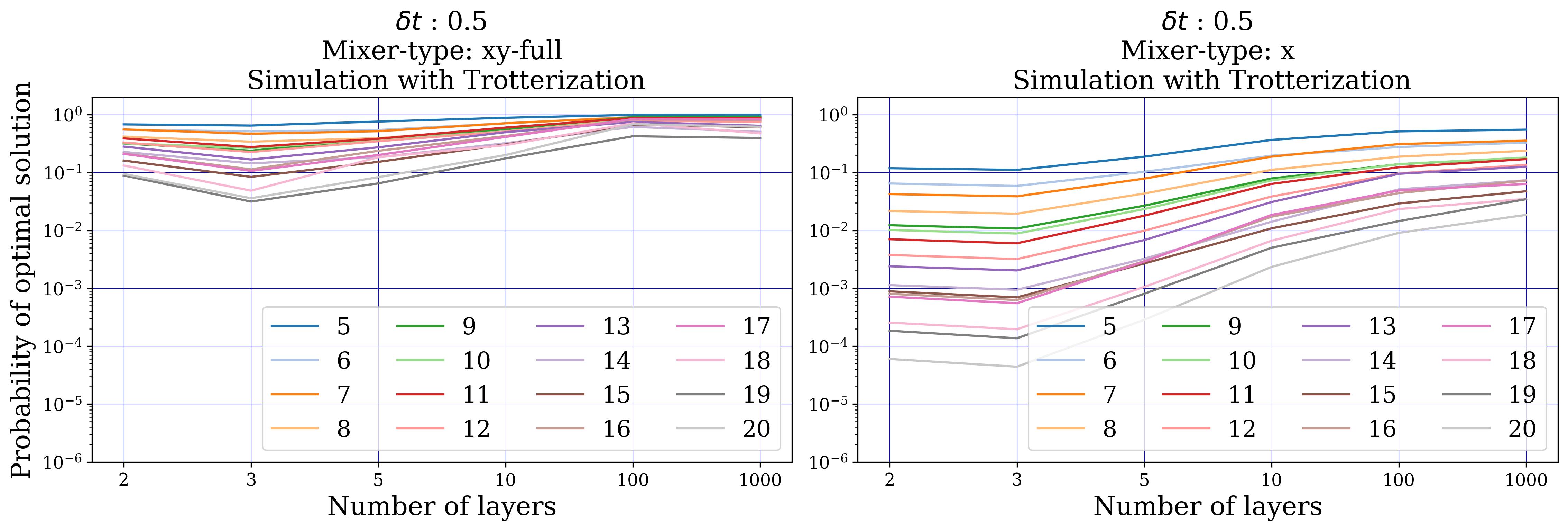}
\caption{\small{Results for Multi-car paint shop problem with approximated unitary evolution via Trotterization of the mixing and phase separation unitaries. The plot shows results for two different $\delta t = 0.25$ and $0.50$, using standard $X$-mixers as well and the $XY$-mixers with full connectivity. In each plot the $x$-axis shows the number of Trotter steps used in the respective experiment and the $y$-axis shows the probability of measuring the optimal solution which was determined classically beforehand. The probability of sampling an optimal solution for any problem size is averaged over $10$ different instances of the same size.}}
\label{fig:mcps_results}
\end{figure}

This behavior can be attributed directly to the local constraint structure of the MCPS problem. Since each equality constraint acts only on a small ensemble, the total mixer Hamiltonian decomposes into commuting blocks, $\mathcal{H}_{\mathrm{init}} = \sum_j \mathcal{H}^{(j)}_{\mathrm{XY}}$, $\bigl[ \mathcal{H}^{(j)}_{\mathrm{XY}},\, \mathcal{H}^{(j')}_{\mathrm{XY}} \bigr] = 0$, for $j \neq j'$. As a result, the number of non-vanishing commutators that govern the leading-order Trotter error, $\mathcal{O}\!\left(\delta t^2 \sum_{a<b} \lVert [\mathcal{H}_a,\mathcal{H}_b] \rVert \right)$, is drastically reduced compared to a single global $XY$-mixer acting on all variables. Moreover, the $XY$-mixer exhibits a smooth and largely monotonic improvement of solution quality with increasing depth, closely resembling ideal adiabatic behavior despite the use of Trotterized evolution. The $X$-mixer, while free of Trotter error in the mixer itself, fails to exploit the problem structure and systematically underperforms.

Overall, the MCPS results demonstrate a regime in which constraint-preserving $XY$-mixers retain their algorithmic advantage even under realistic Trotterization and decisively outperform standard $X$-mixers. Together with the results from Section~\ref{res:portfolio}, this highlights a central message of this work, that the effectiveness of $XY$-mixers under TAE depends critically on the locality and granularity of the enforced constraints.

\subsection{Multi-Commodity Flow Problem (MCFP)}
\label{sec:results_mcmf}
In this section we present numerical results for MCFP introduced in Section~\ref{sec:mcf}. As in the previous sections, our primary objective is to compare the performance of the standard Pauli-$X$ mixer with constraint-preserving $XY$-mixers under TAE.

\begin{figure}[H]
\includegraphics[width=0.98\textwidth,trim={0 3.2cm 0 0},clip]{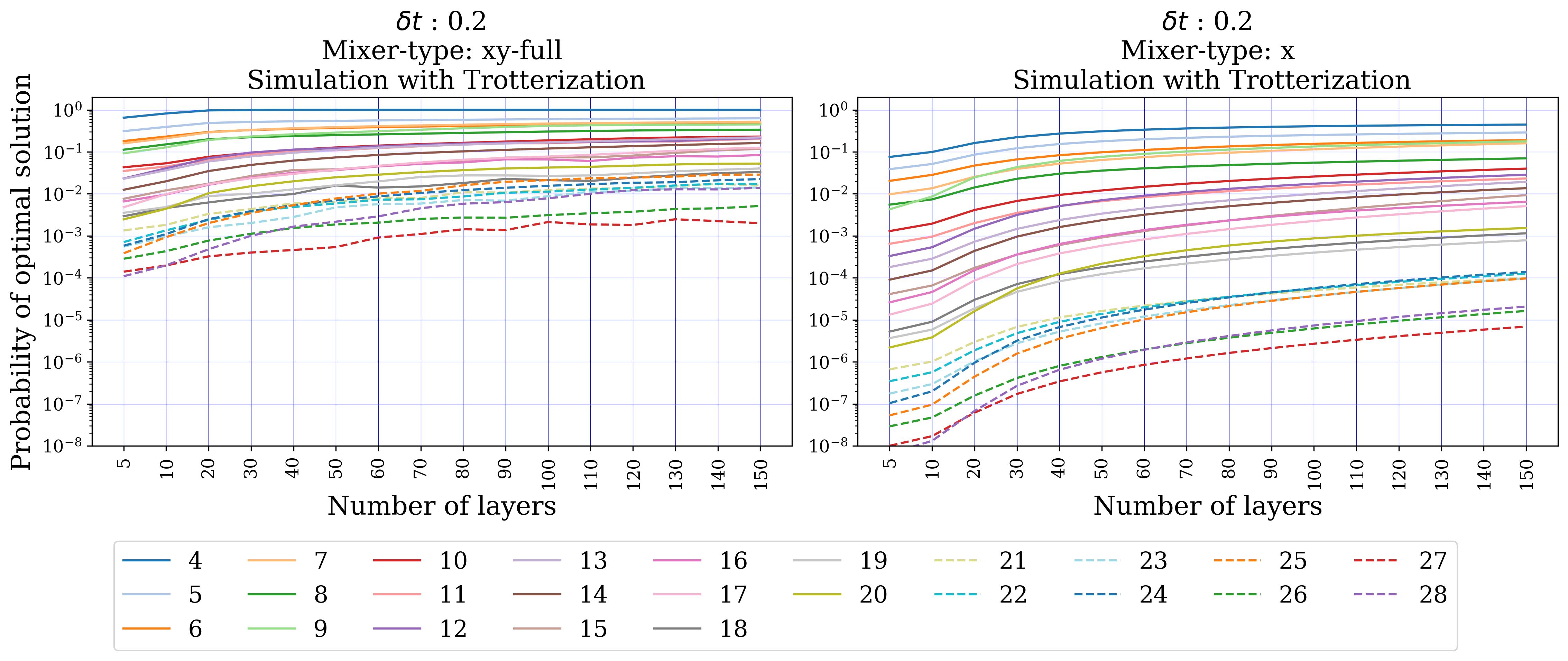}
\includegraphics[width=0.98\textwidth,trim={0 3.2cm 0 0},clip]{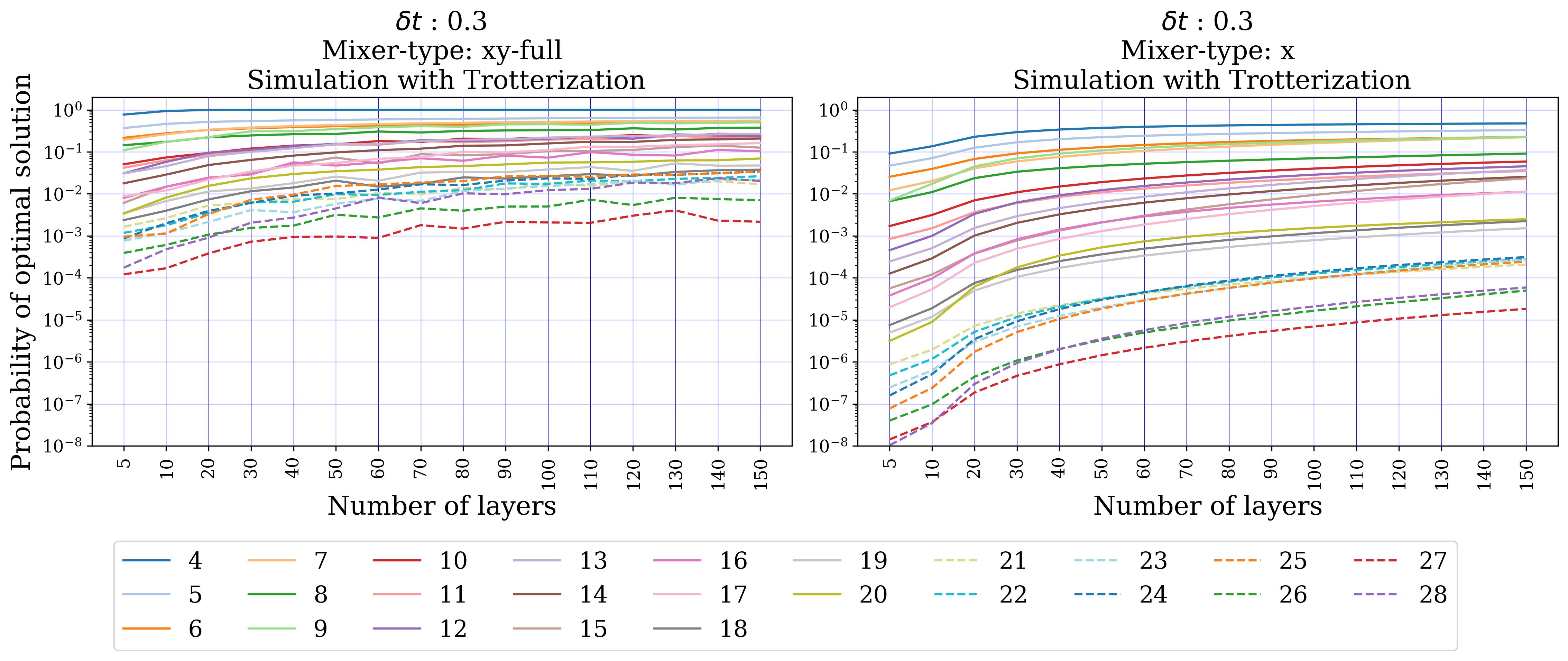}
\includegraphics[width=0.98\textwidth]{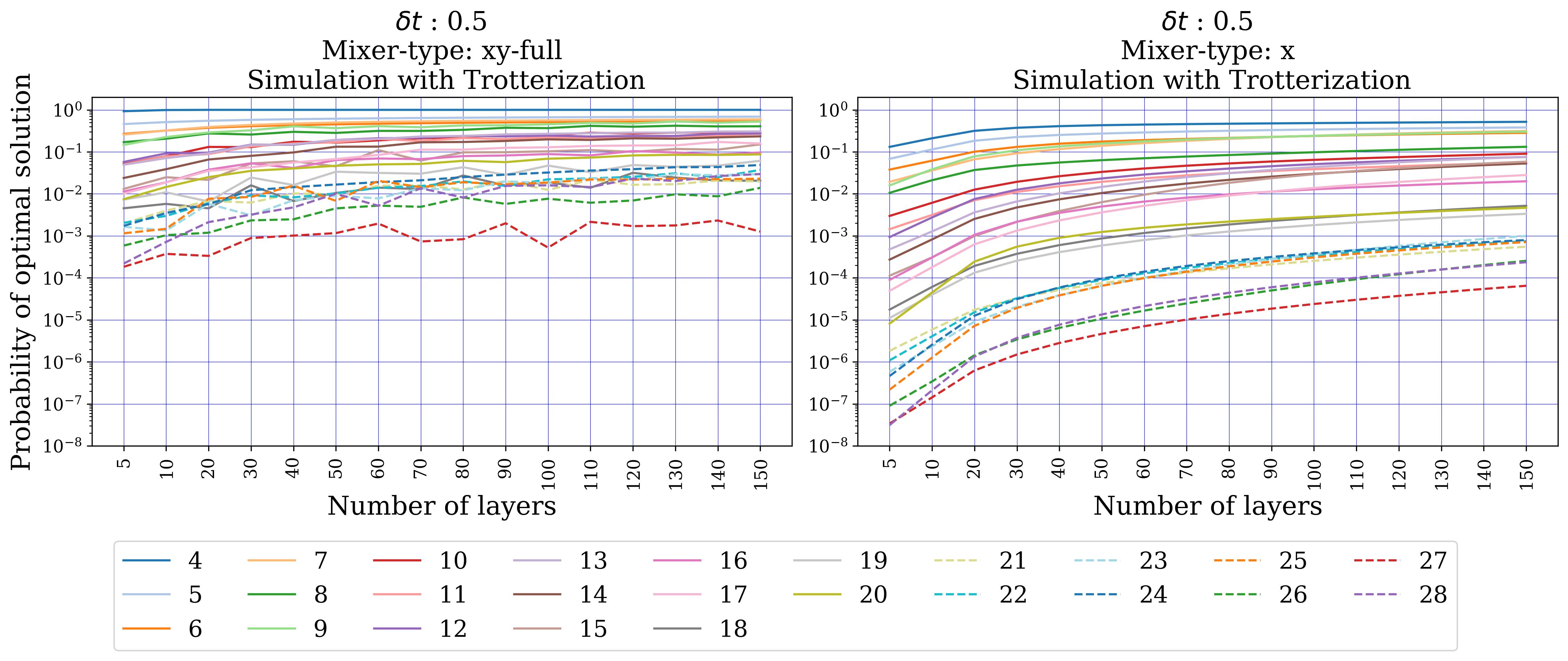}
\caption{\small{Results for the multi-commodity flow problem with approximated unitary evolution via Trotterization of the mixing and phase separation unitaries. The plot shows results for $\delta t = 0.2, 0.3$ and $0.5$, using standard $X$-mixers as well as full $XY$-mixers. In each plot the $x$-axis shows the number of Trotter steps used in the respective experiment and the $y$-axis shows the probability of measuring the optimal solution which was determined classically beforehand. The probability of sampling an optimal solution for any problem size is averaged over $10$ different instances of the same size.}}
\label{fig:mcfp_results}
\end{figure}

The MCFP considered in this work exhibits a constraint structure that lies between the two extremes studied so far, the Portfolio Optimization (Section~\ref{res:portfolio}), which contains a single global equality constraint, and the Multi-Car Paint Shop problem (Section~\ref{sec:results_mcps}), which consists of many small disjoint equality constraints. In the MCFP, each commodity $k$ is associated with one local one-hot constraint, $\sum_{p \in \mathcal{P}_k} x_{k,p} = 1$, where $\mathcal{P}_k$ denotes the set of candidate paths for commodity $k$, and the decision variables $x_{k,p}$ are disjoint across different commodities. As a result, the mixer Hamiltonian decomposes into a sum of independent $XY$-mixers, one for each commodity. Unlike the Portfolio Optimization problem, no equality constraint spans all variables; instead, each constraint acts only on a small subset corresponding to the paths of a single commodity.

We evaluated the performance of TAE for the MCFP on problem sizes ranging from $4$ to $28$ qubits, corresponding to the total number of path variables across all commodities. For each problem size, $10$ random instances were generated. The capacities and demands were drawn from a single-digit integer range, while the penalties were chosen sufficiently large, in this case $10000$. 
We executed TAE with Trotter steps from the set $\{5,10,20,30,\dots,150\}$ and a sinusoidal schedule with $\delta t$ from the set $\{0.01,0.1,0.2,0.3,0.5,0.75,0.8,0.9\}$ on a statevector simulator provided by Qiskit. 

Figure~\ref{fig:mcfp_results} shows representative results for the MCFP with $\delta t = 0.2$, $0.3$, and $0.5$. Across all tested problem sizes and annealing depths, the $XY$-mixer consistently yields a higher probability of sampling an optimal solution than the standard $X$-mixer. The results over all the $\delta t$ values for MCFP are presented in Appendix~\ref{appendix_mcfp}.

While Trotter error is still present due to the non-commutativity of $XY$ terms within each commodity block, its impact is significantly reduced compared to the Portfolio Optimization problem. This can be attributed to the constraint structure in that the $XY$-mixer decomposes into independent blocks for MCFP. The Trotter errors arise only from non-commuting terms within the mixer of an individual commodity. Since the number of paths per commodity is typically small, the number of overlapping two-qubit interaction terms remains limited.

At the same time, the $XY$-mixer rigorously preserves the one-hot constraints throughout the evolution, ensuring that the amplitude remains confined to the feasible subspace for each commodity. In contrast, the $X$-mixer explores the full Hilbert space and relies on large penalty terms in the problem Hamiltonian to discourage infeasible configurations. As observed consistently across all tested instances, this leads to a systematic reduction in the probability mass available for optimal solutions. For moderate Trotter step sizes (notably $\delta t = 0.2$), the $XY$-mixer exhibits stable and monotonic improvement with increasing depth, closely approximating ideal adiabatic behavior. For larger $\delta t$, the effect of Trotter error becomes visible in the form of mild non-monotonicity and saturation, yet the $XY$-mixer continues to outperform the $X$-mixer in all regimes considered.

In summary, the MCFP results confirm that constraint-preserving $XY$-mixers retain a clear performance advantage under TAE when equality constraints act on moderate-size, disjoint subsets of variables. Along with the MCPS results of Section~\ref{sec:results_mcps}, this demonstrates that $XY$-mixers are particularly well suited for structured constrained optimization problems in which feasibility constraints naturally decompose into local blocks.

\subsection{Constraint-Preserving Mixer for the TSP}
\label{sec:tsp_mixer}

Finally, we introduce a special class of constraint-preserving mixers for 2-way-1-hot constraints that is of interest for future research.
The Traveling Salesperson Problem (TSP) seeks a minimum-length Hamiltonian cycle visiting each city exactly once. A standard binary encoding introduces variables
\[
x_{u,t} \in \{0,1\}, \quad u,t \in \{0,\dots,n{-}1\},
\]
where $x_{u,t}=1$ indicates city $u$ is visited at position $t$. The feasible space consists of permutation matrices, enforced by:
\begin{align}
\sum_u x_{u,t} &= 1 \quad \forall t, \label{eq:tsp_time} \\
\sum_t x_{u,t} &= 1 \quad \forall u. \label{eq:tsp_city}
\end{align}

Naive XY-mixers, such as $X_{u,t}X_{v,t} + Y_{u,t}Y_{v,t}$, preserve \eqref{eq:tsp_time} but violate \eqref{eq:tsp_city}, and vice versa for row-wise terms. Thus, independent two-qubit mixers cannot maintain feasibility.

To preserve both constraints, we define a four-qubit \emph{plaquette} mixer that swaps cities $u \ne v$ between positions $t_1 \ne t_2$:
\[
(u,t_1),(v,t_2) \leftrightarrow (u,t_2),(v,t_1).
\]
This is implemented by the term
\begin{equation}
\mathcal{H}^{(u,v)}_{t_1,t_2} =
(X_{u,t_1}X_{v,t_1} + Y_{u,t_1}Y_{v,t_1})
(X_{u,t_2}X_{v,t_2} + Y_{u,t_2}Y_{v,t_2}),
\label{eq:plaquette}
\end{equation}
which acts non-trivially only on feasible configurations. The full mixer is
\begin{equation}
\mathcal{H}_{\mathrm{TSP}} = \sum_{u < v} \sum_{t_1 < t_2} \mathcal{H}^{(u,v)}_{t_1,t_2}.
\label{eq:tsp_mixer}
\end{equation}

Let \( \mathbb{H} = (\mathbb{C}^2)^{\otimes n^2} \) be the Hilbert space of all binary assignments \( x \in \{0,1\}^{n \times n} \). The feasible subspace \( \mathbb{H}_{\mathrm{feasible}} \subset \mathbb{H} \) consists of all basis states corresponding to permutation matrices,~\emph{i.e.}, those satisfying Eqs.~\eqref{eq:tsp_time}--\eqref{eq:tsp_city}. The projector onto this subspace is denoted by
\begin{equation}
P_{\mathrm{feasible}} = \sum_{x \in \mathcal{F}} |x\rangle\langle x|,
\end{equation}
where \( \mathcal{F} \) is the set of feasible configurations. A mixer Hamiltonian \( \mathcal{H}_{\mathrm{TSP}} \) preserves feasibility if it commutes with this projector:
\begin{equation}
[\mathcal{H}_{\mathrm{TSP}}, P_{\mathrm{feasible}}] = 0.
\end{equation}
This ensures that the quantum evolution remains within the feasible subspace when initialized in a valid tour. Trotterized evolution initialized in a feasible state remains within the permutation subspace. Though higher-order, the mixer acts locally on small constraint blocks, and Trotter error is governed by overlapping quartets rather than global interactions.

\section{Conclusion and Future Work}
\label{sec:conclusion}
In this work, we study constraint-preserving $XY$-mixers for combinatorial optimization under TAE, with the goal of understanding when and why such mixers outperform the standard Pauli-$X$ mixer, and under which conditions their practical implementation is limited by Trotterization error. From a theoretical standpoint, $XY$-mixers provide a principled mechanism for enforcing equality constraints by conserving Hamming weight, thereby eliminating the need for penalty terms in the problem Hamiltonian. By initializing the evolution in the $k$-hot Dicke states - the eigenstates of the full-connectivity $XY$ Hamiltonian—adiabatic dynamics confines strictly to the feasible subspace, and algorithmic success depends only on the spectral gap within that subspace rather than on the global gap. This offers a natural alignment between the structure of constrained optimization problems and mixer design.

At the same time, we identify Trotter error as a fundamental practical limitation of $XY$-mixers. Due to the dense non-commutativity of pairwise exchange terms, the Trotter error of a fully connected $XY$ Hamiltonian grows rapidly with the number of variables involved in a single equality constraint. As shown in Section~\ref{sec:trotter_error}, this scaling depends critically on the size of individual constraint blocks rather than on the total problem size.  Our numerical results clearly reflect this distinction. For Portfolio Optimization, which contains a single global $k$-hot constraint spanning all variables, $XY$-mixers perform well only in the idealized setting of exact unitary evolution. Under realistic Trotterized implementations, their performance degrades significantly with increasing problem size and circuit depth, and the standard $X$-mixer becomes more robust despite its reliance on penalty terms. In contrast, for the Multi-Car Paint Shop problem and the Multi-Commodity Flow problem, the equality constraints decompose into multiple disjoint local blocks. In these cases, $XY$-mixers consistently and often dramatically outperform $X$-mixers—even under Trotterization—yielding orders-of-magnitude improvements in the probability of sampling optimal solutions. These results demonstrate that constraint locality substantially suppresses the accumulation of Trotter error and allows the advantages of constraint-preserving evolution to manifest in practice.

We can conclude that the $XY$-mixers are highly effective for structured optimization problems in which equality constraints decompose into small, independent subsets of variables, but they are poorly suited for problems dominated by large global constraints when implemented via naive Trotterization. Beyond the specific results presented here, this work highlights a broader algorithmic perspective. While QAOA remains an important NISQ-era heuristic, its reliance on variational parameter optimization introduces instability, transferability issues, and significant classical overhead. TAE provides a more controlled alternative by combining the theoretical guarantees of adiabatic quantum computation with the flexibility of gate-based implementations, enabling systematic improvements through increased Trotter depth.

Several promising directions emerge from this study. Developing improved Trotterization schemes, such as higher-order decompositions, adaptive ordering, or counterdiabatic terms may substantially extend the range of applicability of $XY$-mixers~\cite{PhysRevLett.133.010603,Hegade_2022}. Exploring partially connected or dynamically sparse $XY$-mixers could further balance constraint preservation with circuit depth, however, as we describe for the $XY$-ring mixer, one should take care that the required $k$-hot states are maintained in the sparse mixer Hamiltonian. Extending these ideas to additional problem classes such as structured scheduling and routing problems, will be essential for establishing structure-aware mixer design as a general paradigm for quantum optimization. Finally, we also present a $XY$-mixer for 2-way-1-hot constraints which is useful for problems like the TSP. Further tests can be carried out to evaluate the performance of this mixer for TSP-like problems to assess the suitability of the $XY$-mixers.

\section*{Acknowledgements}
We would like to thank Lilly Palackal, Janik Sch\"onmeier-Kromer, Karen Wintersperger, Justin Pauckert and Yannick Sch\"afer for their insightful discussions and guidance on technical topics.

\newpage
\appendix
\section{Portfolio Optimization: Results against different $\delta t$ values}\label{app_portfolio}
\subsection{Exact Simulation of Mixer Unitaries: Portfolio Optimization}
\begin{figure}[H]
\centering
\includegraphics[height=1.2\textwidth]{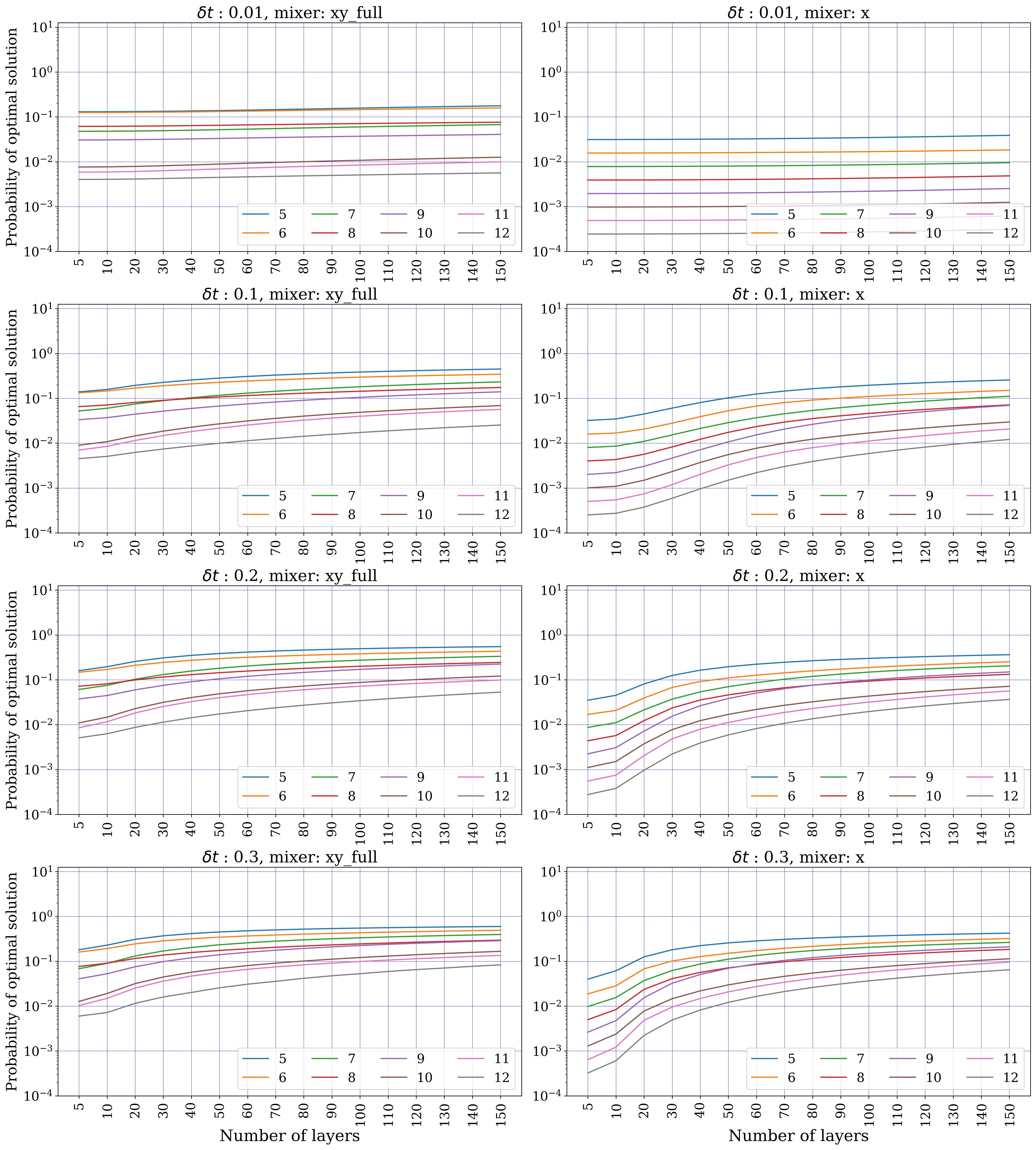}
\caption{\small{Results for Portfolio Optimization with exact unitary evolution,~\emph{i.e.}, without Trotterization of the mixing and phase separation unitaries but with exact computation of the unitary matrices in Eq.~\eqref{qaoa_trotter}. The plot shows results for $\delta t \in \{0.01,0.1,0.2,0.3\}$, using standard $X$-mixers and the $XY$-mixers with full connectivity. In each plot the $x$-axis shows the number of Trotter steps used in the respective experiment and the $y$-axis shows the probability of measuring the optimal solution which was determined classically beforehand.}}
\end{figure}

\begin{figure}[H]
\centering
\includegraphics[height=1.2\textwidth]{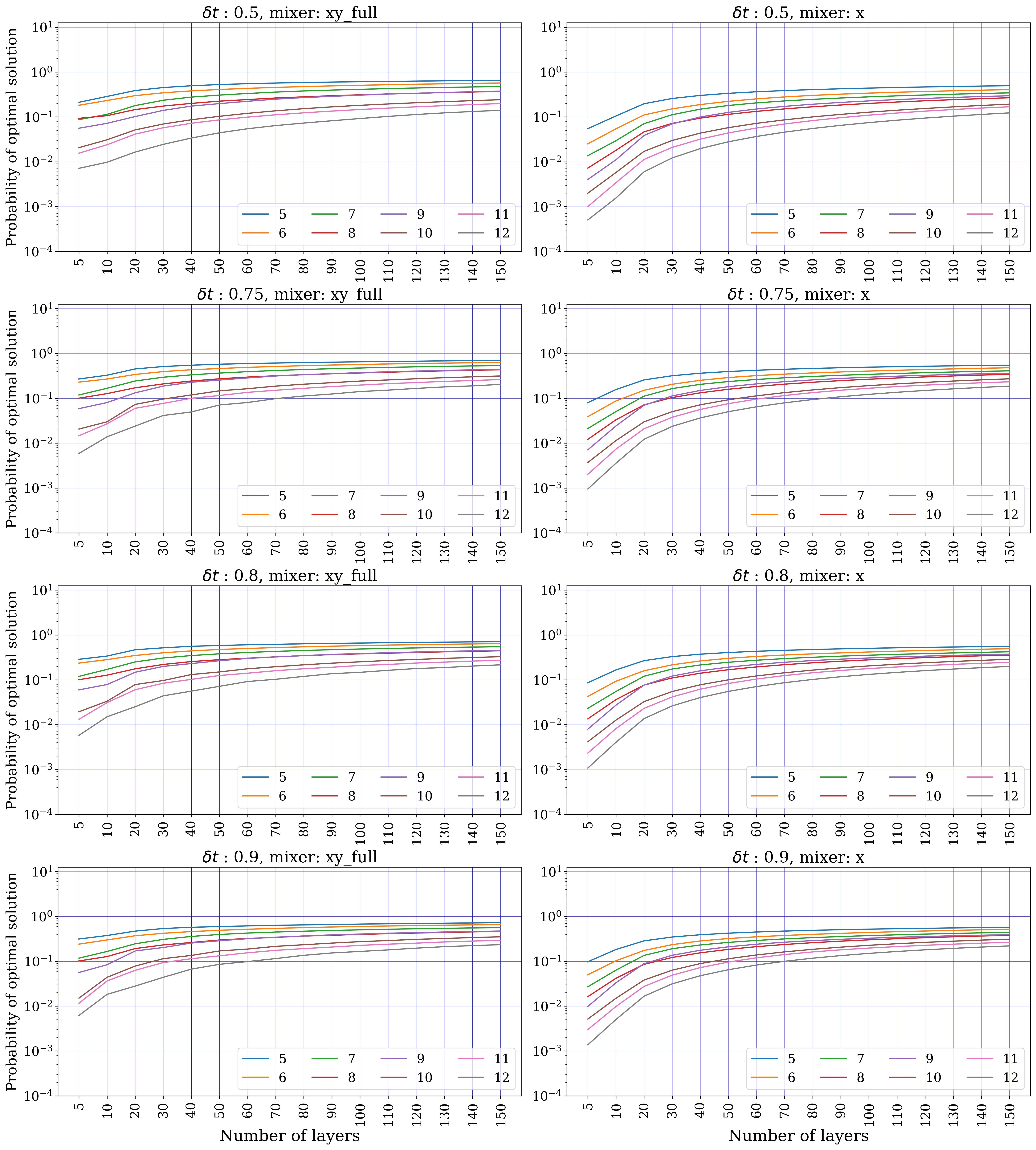}
\caption{\small{Results for Portfolio Optimization with exact unitary evolution,~\emph{i.e.}, without Trotterization of the mixing and phase separation unitaries but with exact computation of the unitary matrices in Eq.~\eqref{qaoa_trotter}. The plot shows results for $\delta t \in \{0.5,0.75,0.8,0.9\}$, using standard $X$-mixers and the $XY$-mixers with full connectivity. In each plot the $x$-axis shows the number of Trotter steps used in the respective experiment and the $y$-axis shows the probability of measuring the optimal solution which was determined classically beforehand.}}
\end{figure}

\subsection{Trotterized Simulation of Mixer Unitaries: Portfolio Optimization}
\begin{figure}[H]
\centering
\includegraphics[height=1.2\textwidth]{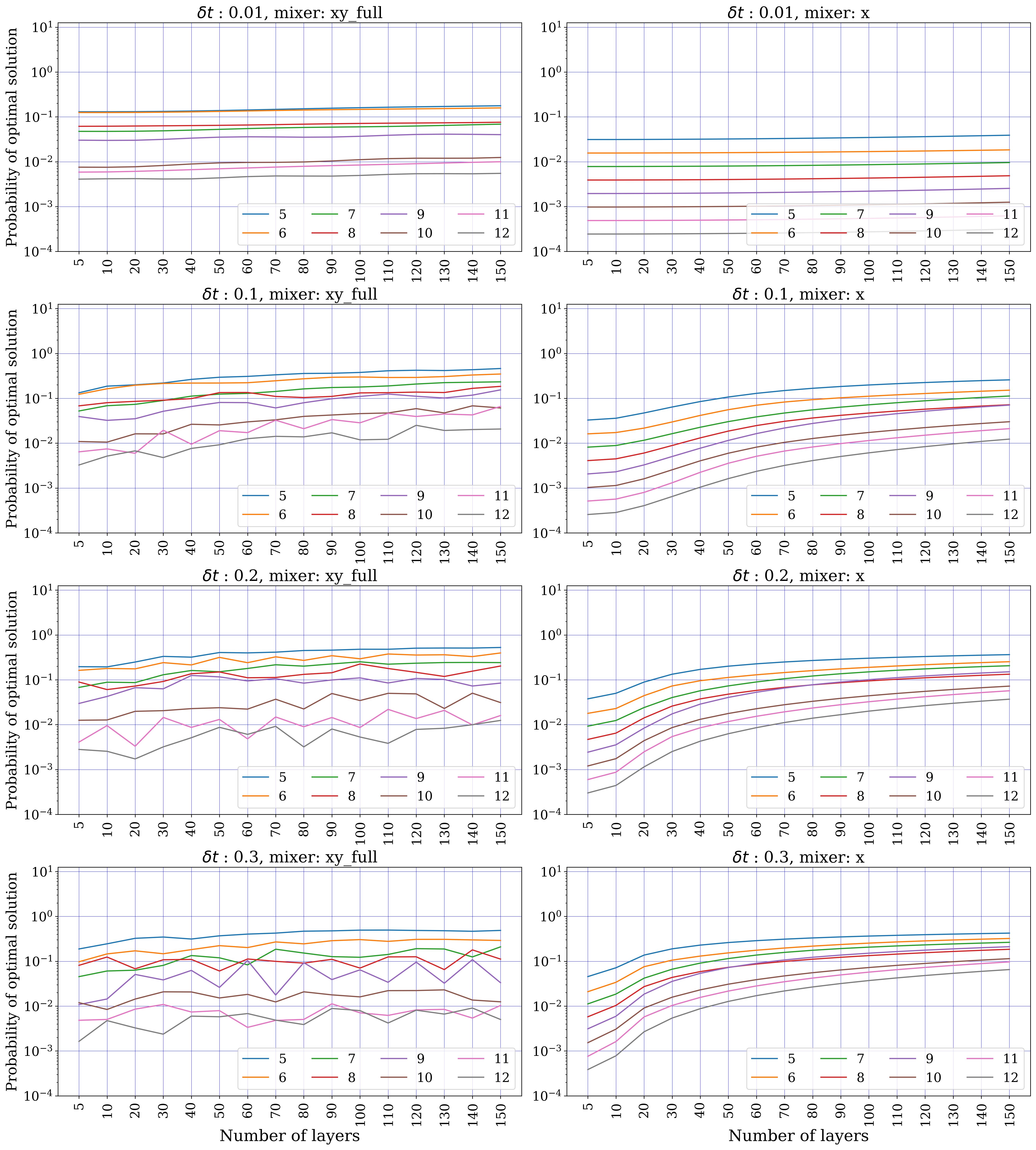}
\caption{\small{Results for Portfolio Optimization with approximated unitary evolution via Trotterization of the mixing and phase separation unitaries. The plot shows results for two different $\delta t \in\{0.01, 0.1,0.2,0.3\}$, using standard $X$-mixers as well and the $XY$-mixers with full connectivity. In each plot the $x$-axis shows the number of Trotter steps used in the respective experiment and the $y$-axis shows the probability of measuring the optimal solution which was determined classically beforehand. The probability of optimal solution for any problem size is averaged over $10$ different instances of the same size.}}
\end{figure}

\begin{figure}[H]
\centering
\includegraphics[height=1.2\textwidth]{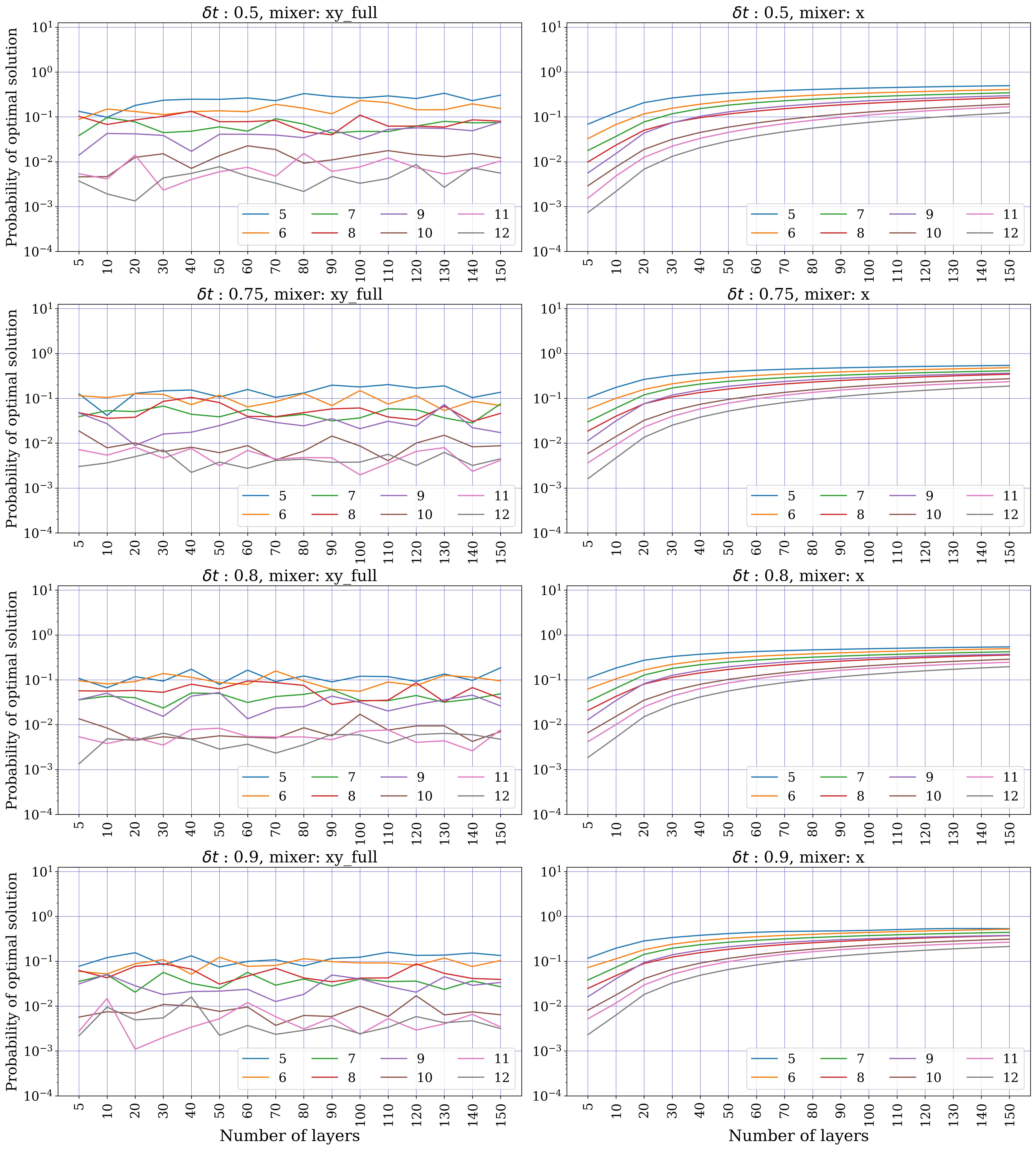}
\caption{\small{Results for Portfolio Optimization with approximated unitary evolution via Trotterization of the mixing and phase separation unitaries. The plot shows results for two different $\delta t \in\{0.5, 0.75,0.8,0.9\}$, using standard $X$-mixers as well and the $XY$-mixers with full connectivity. In each plot the $x$-axis shows the number of Trotter steps used in the respective experiment and the $y$-axis shows the probability of measuring the optimal solution which was determined classically beforehand. The probability of optimal solution for any problem size is averaged over $10$ different instances of the same size.}}
\end{figure}

\section{MCPS Optimization: Results against different $\delta t$ values}\label{appendix_mcps}
\subsection{Trotterized Simulation of Mixer Unitaries: Multi-car Paint Shop}
\begin{figure}[H]
\centering
\includegraphics[height=1.2\textwidth]{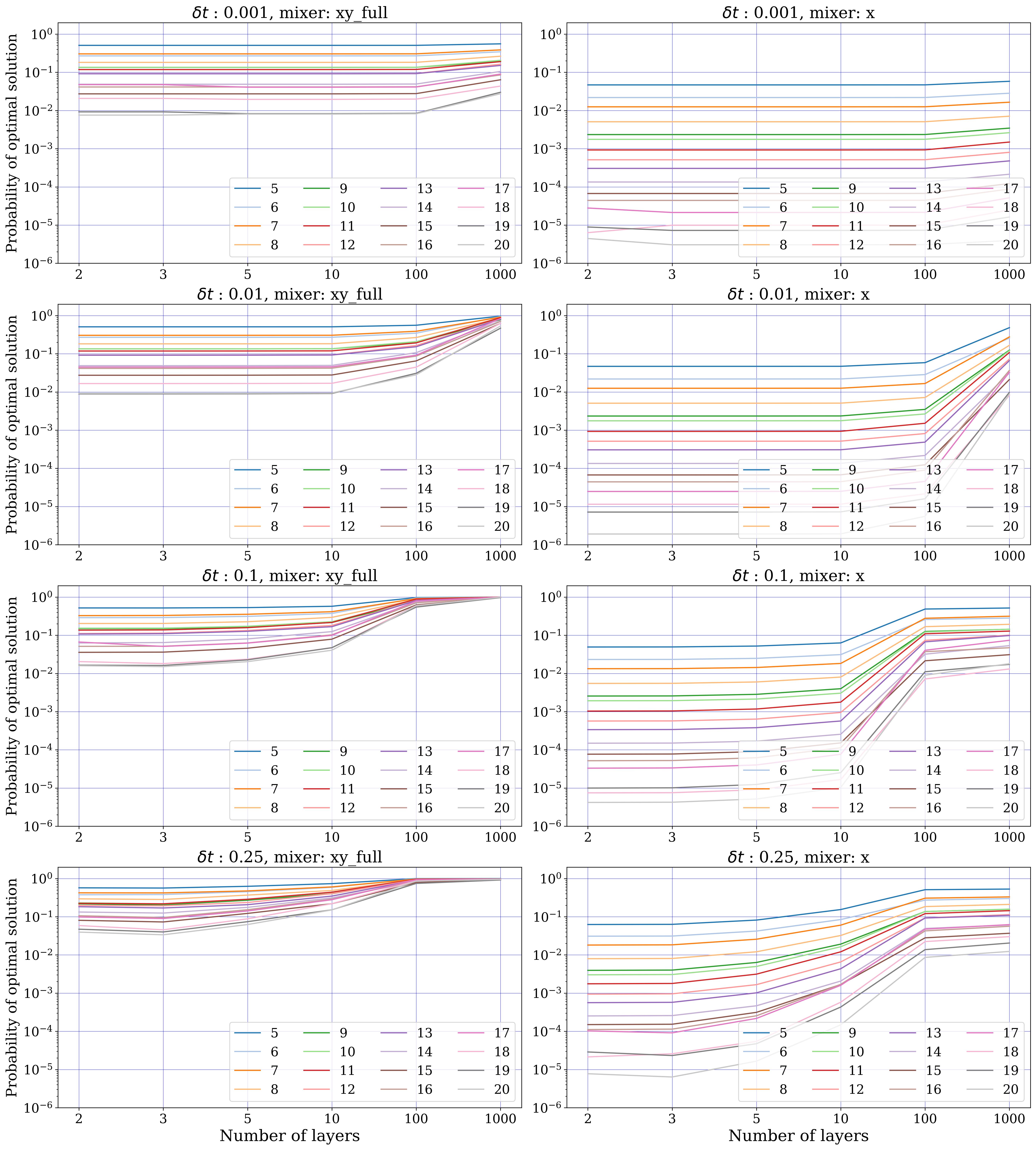}
\caption{\small{Results for Multi-car paint shop problem with approximated unitary evolution via Trotterization of the mixing and phase separation unitaries. The plot shows results for two different $\delta t \in \{0.001, 0.01, 0.1, 0.25\}$, using standard $X$-mixers as well and the $XY$-mixers with full connectivity. In each plot the $x$-axis shows the number of Trotter steps used in the respective experiment and the $y$-axis shows the probability of measuring the optimal solution which was determined classically beforehand. The probability of optimal solution for any problem size is averaged over $10$ different instances of the same size.}}
\end{figure}

\begin{figure}[H]
\centering
\includegraphics[height=1.2\textwidth]{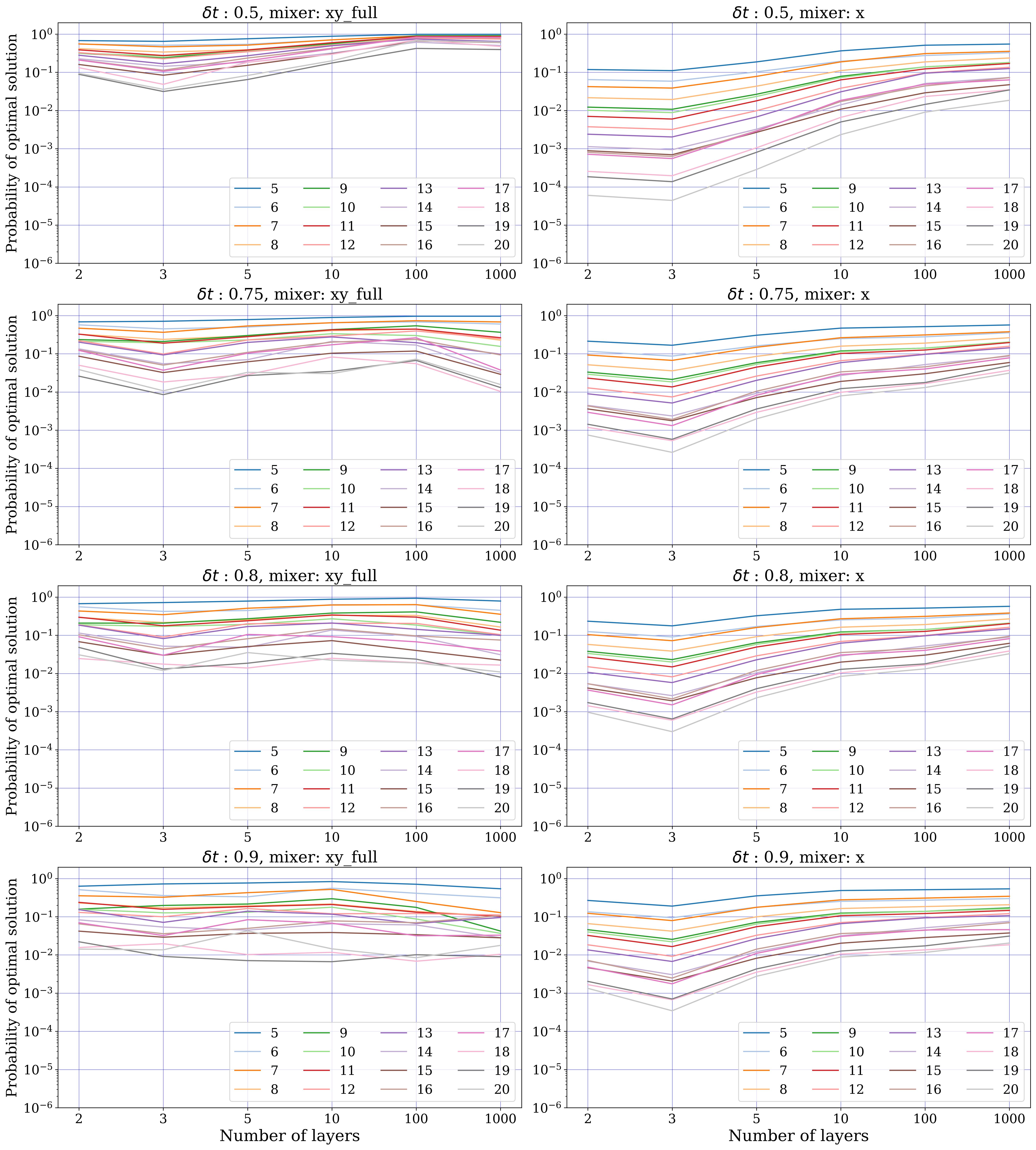}
\caption{\small{Results for Multi-car paint shop problem with approximated unitary evolution via Trotterization of the mixing and phase separation unitaries. The plot shows results for two different $\delta t \in \{0.5, 0.75, 0.8, 0.9\}$, using standard $X$-mixers as well and the $XY$-mixers with full connectivity. In each plot the $x$-axis shows the number of Trotter steps used in the respective experiment and the $y$-axis shows the probability of measuring the optimal solution which was determined classically beforehand. The probability of optimal solution for any problem size is averaged over $10$ different instances of the same size.}}
\end{figure}

\section{MCFP Optimization: Results against different $\delta t$ values}\label{appendix_mcfp}
\subsection{Trotterized Simulation of Mixer Unitaries: Multi-Commodity Flow}
\begin{figure}[H]
\centering
\includegraphics[height=1.2\textwidth]{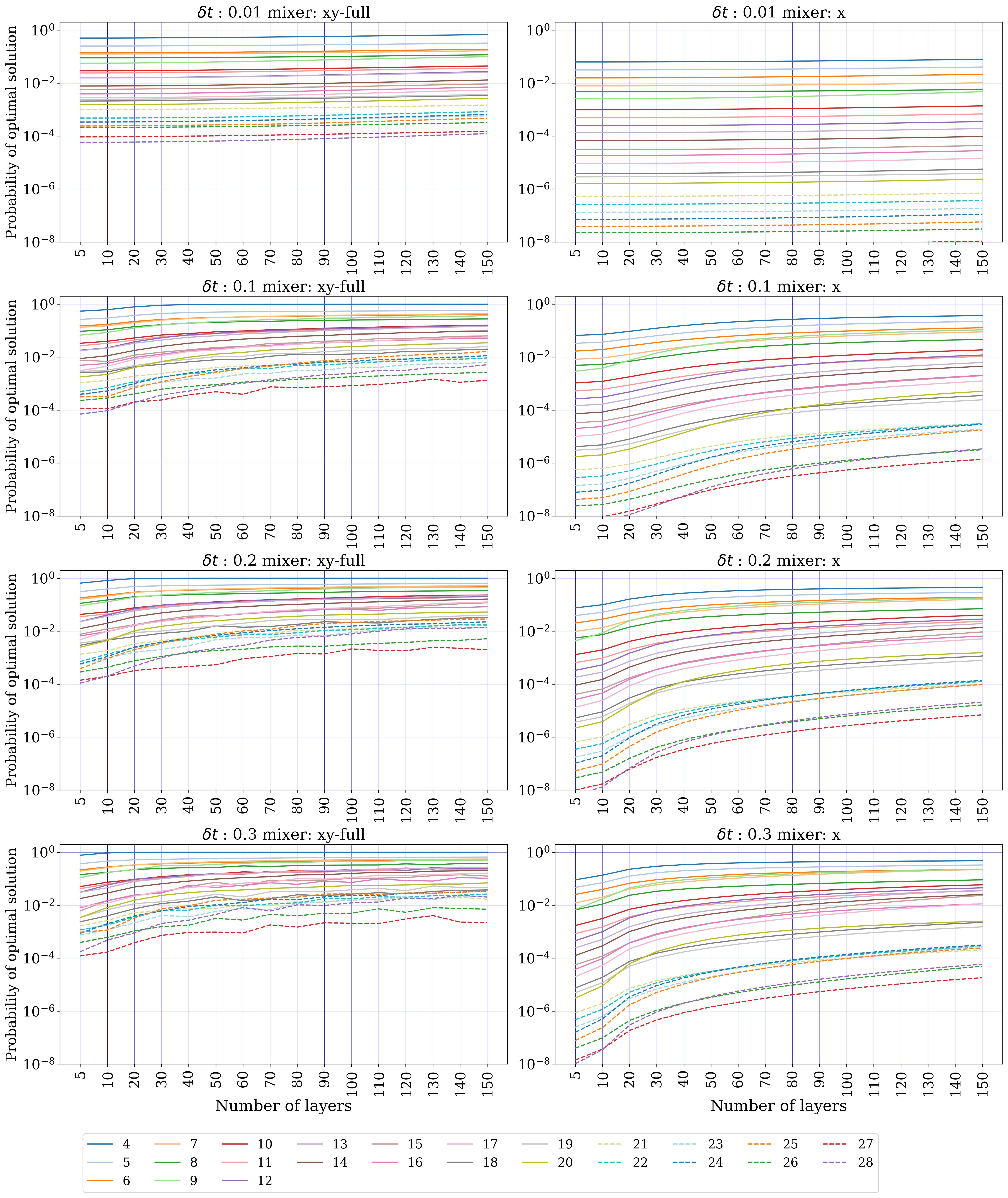}
\caption{\small{Results for the multi-commodity flow problem with approximated unitary evolution via Trotterization of the mixing and phase separation unitaries. The plot shows results for $\delta t \in \{0.01,0.1,0.2,0.3\}$, using standard $X$-mixers as well as full $XY$-mixers. In each plot the $x$-axis shows the number of Trotter steps used in the respective experiment and the $y$-axis shows the probability of measuring the optimal solution which was determined classically beforehand. The probability of optimal solution for any problem size is averaged over $10$ different instances of the same size.}}
\end{figure}

\begin{figure}[H]
\centering
\includegraphics[height=1.2\textwidth]{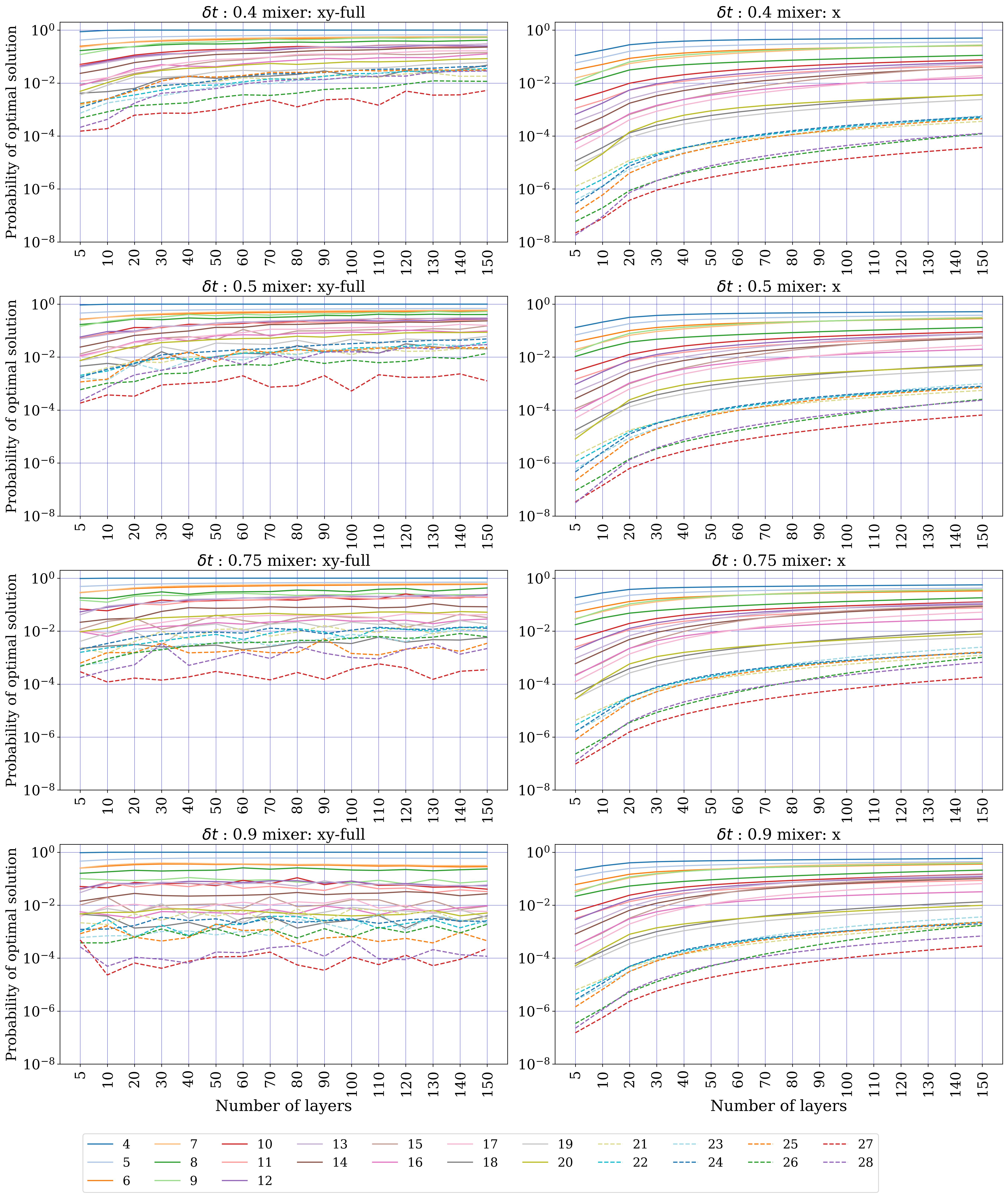}
\caption{\small{Results for the multi-commodity flow problem with approximated unitary evolution via Trotterization of the mixing and phase separation unitaries. The plot shows results for $\delta t \in \{0.4,0.5,0.75,0.9\}$, using standard $X$-mixers as well as full $XY$-mixers. In each plot the $x$-axis shows the number of Trotter steps used in the respective experiment and the $y$-axis shows the probability of measuring the optimal solution which was determined classically beforehand. The probability of optimal solution for any problem size is averaged over $10$ different instances of the same size.}}
\end{figure}

\end{document}